\let\accentvec\vec
\let\vec\accentvec
\pgfplotsset{compat=1.3}
\newcommand{\sqqcupop}{%
  \begin{tikzpicture}
    \draw[line width=0.09ex] (0,0.32ex) -- (0.98ex+0.14ex,0.32ex);
    \pgfsetroundcap
    \draw[line width=0.09ex,rounded corners=0.003ex] (0,1.05ex+0.14ex) -- (0,0) -- (0.98ex+0.14ex,0) -- (0.98ex+0.14ex,1.05ex+0.14ex);
  \end{tikzpicture}}
\newcommand{\sqqcapop}{%
  \begin{tikzpicture}
    \draw[line width=0.09ex] (0,0.74ex+0.14ex) -- (0.98ex+0.14ex,0.74ex+0.14ex);
    \pgfsetroundcap
    \draw[line width=0.09ex,rounded corners=0.003ex] (0,0) -- (0,1.05ex+0.14ex) -- (0.98ex+0.14ex,1.05ex+0.14ex) -- (0.98ex+0.14ex,0);
  \end{tikzpicture}}
\newcommand{\updop}{%
  \begin{tikzpicture}
    \draw[line width=0.09ex] (0,0.32ex) -- (0.98ex+0.14ex,0.32ex);
    \draw[line width=0.09ex] (0,0.74ex+0.14ex) -- (0.98ex+0.14ex,0.74ex+0.14ex);
    \pgfsetroundcap
    \draw[line width=0.09ex,rounded corners=0.003ex] (0,0) -- (0,1.05ex+0.14ex) -- (0.98ex+0.14ex,1.05ex+0.14ex) -- (0.98ex+0.14ex,0) -- (0,0);
  \end{tikzpicture}}
\newcommand{\boxop}{%
  \begin{tikzpicture}
    \pgfsetroundcap
    \draw[line width=0.09ex,rounded corners=0.003ex] (0,0) -- (0,1.05ex+0.14ex) -- (0.98ex+0.14ex,1.05ex+0.14ex) -- (0.98ex+0.14ex,0) -- (0,0);
  \end{tikzpicture}}
\DeclareMathOperator*{\sqqcap}{\hspace{0.5pt}\sqqcapop\hspace{1.3pt}}
\DeclareMathOperator*{\sqqcup}{\hspace{0.5pt}\sqqcupop\hspace{1.3pt}}
\DeclareMathOperator*{\upd}{\hspace{0.5pt}\updop\hspace{1.3pt}}
\DeclareMathOperator{\widen} {\sqqcup}
\DeclareMathOperator{\narrow}{\sqqcap}
\renewcommand{\qed}{{$\blacksquare$}}
\let\Box\@undefined     
\DeclareMathOperator{\Box} {\boxop}
\newcommand{\update}{\mathit{update}}
\newcommand{\D}{{\mathbb D}}
\let\oldmarginpar\marginpar
\renewcommand\marginpar[1]{\-\oldmarginpar[\raggedleft #1]{\raggedright #1}}
\newcommand{\angl}[1]{\langle #1 \rangle}
\newcommand\justification[1]{\quad[\mbox{#1}]}
\def\ignoretrue{\global\@ignoretrue}
\def\defrelations{
   \def\better{\etc{\geq}}         
   \def\worse {\etc{\leq}}         %
   \def\equals{\etc{=}}            %
   \def\equivals{\etc{\equiv}}     %
   \def\lesseq{\etc{\leq}}         %
   \def\greateq{\etc{\geq}}        %
   \def\subseteqwithoutwhy{\etc{\subseteq}}
   \def\superseteq{\etc{\supseteq}}    %
   \def\inwithoutwhy{\etc{\in}}    %
   \def\iff{\etc{\mbox{iff}}}      
   \def\betterwhy##1{\etcwhy{\geq}{##1}}
   \def\worsewhy##1{\etcwhy{\leq}{##1}}
   \def\equalswhy##1{\etcwhy{=}{##1}}
   \def\equivalswhy##1{\etcwhy{\equiv}{##1}}
   \def\lesseqwhy##1{\etcwhy{\leq}{##1}}
   \def\greateqwhy##1{\etcwhy{\geq}{##1}}
   \def\subseteqwhy##1{\etcwhy{\subseteq}{##1}}
   \def\supseteqwhy##1{\etcwhy{\supseteq}{##1}}
   \def\sqsubseteqwhy##1{\etcwhy{\sqsubseteq}{##1}}
   \def\sqsupseteqwhy##1{\etcwhy{\sqsupseteq}{##1}}
   \def\iffwhy##1{\etcwhy{\mbox{iff}}{##1}}
   \def\implieswhy##1{\etcwhy{\Rightarrow}{##1}}
   \def\inwhy##1{\etcwhy{\in}{##1}}
   \def\LHS{{\rm LHS}}              
   \def\RHS{{\rm RHS}}              
}
\def\termskip{1ex}             
   \def\commentskip{0.5ex}        
   \def\etc##1{\cr\noalign{\vspace{\termskip}}##1&}
   \def\etcwhy##1##2{\cr\noalign{\vspace{\termskip}}##1&\justification{##2}\cr\noalign{\vspace*{\commentskip}}&}
\def\tabsign{&}
   \def\signspace{\ \ }  
   \def\etc##1{\tabsign\signspace ##1\signspace\tabsign\def\tabsign{}}
   \def\etcwhy##1##2
         \def\tabsign{}
   \def\linecalcendsequence{}
   \def\qed{\gdef\linecalcendsequence{\hfill\proofendsign}}     
\global\def\csysendthing {#1} %
\newcommand{\scon}[2][]{ #2 & \ifthenelse{\equal{#1}{}}{}{\qquad\forall{#1}}}
\newcommand{\escon}[3][]
 {#2\!%
  \begin{aligned}[t]%
      #3%
  \end{aligned} \ifthenelse{\equal{#1}{}}{}{\qquad\forall{#1}}}
\newtheorem{fact}{Fact}
\newtheorem{theorem}{Theorem}
\newtheorem{lemma}[theorem]{Lemma}
\newtheorem{example}[theorem]{Example}
\newproof{proof}{Proof}
\newcommand{\eval}{\mathsf{eval}\xspace}
\newcommand{\SRR}{\textbf{SRR}\xspace}
\newcommand{\SW}{\textbf{SW}\xspace}
\newcommand{\SLR}{\textbf{SLR}\xspace}
\newcommand{\solve}{\textsf{solve}\xspace}
\newcommand{\infl}{\textsf{infl}\xspace}
\newcommand{\key}{\textsf{key}\xspace}
\newcommand{\dep}{\mathsf{dep}}
\newcommand{\wpoint}{\textsf{wpoint}\xspace}
\global\let\tikz@ensure@dollar@catcode=\relax
\begin{document}
\begin{frontmatter}




\title{Efficiently intertwining widening and narrowing\tnoteref{conf,support}}

\author[pescara]{Gianluca Amato}
\ead{gamato@unich.it}

\author[pescara]{Francesca Scozzari}
\ead{fscozzari@unich.it}

\author[munchen]{Helmut Seidl}
\ead{seidl@in.tum.de}

\author[tartu]{Kalmer Apinis}
\ead{kalmer.apinis@ut.ee}

\author[tartu]{Vesal Vojdani}
\ead{vesal.vojdani@ut.ee}

\address[pescara]{Universit\`a di Chieti-Pescara}
\address[munchen]{Technische Universit\"at M\"unchen}
\address[tartu]{University of Tartu}

\begin{abstract}
  Non-trivial analysis problems require posets
with infinite ascending and descending chains. In order
to compute reasonably precise post-fixpoints of the resulting
systems of equations, Cousot and Cousot have suggested accelerated
fixpoint iteration by means of widening and narrowing 
\citep{CousotCousot76,CousotCousot77}.

The strict separation into phases, however, may unnecessarily give up
precision that cannot be recovered later,
as over-approximated interim results have to be fully propagated through the equation the system.
Additionally, classical two-phased approach is not suitable for equation systems with infinitely many unknowns---where demand driven solving must be used.
Construction of an intertwined approach must be able to answer when it is safe to apply narrowing---or when widening must be applied.
In general, this is a difficult problem.
In case the right-hand sides of equations are monotonic, however, we can always apply narrowing whenever we have reached a post-fixpoint for an equation.
The assumption of monotonicity, though, is not met in presence of widening.
It is also not met by equation systems corresponding
to context-sensitive inter-procedural analysis,
possibly combining context-sensitive analysis of local information
with flow-insensitive analysis of globals \citep{Apinis12}.

As a remedy, we present
a novel operator $\upd$ that combines a given widening operator $\widen$ with
a given narrowing operator $\narrow$.
We present adapted versions of round-robin as well as of 
worklist iteration,
local and side-effecting solving algorithms for the combined operator $\upd$ 
and prove that the resulting solvers always return sound results and
are guaranteed to terminate for monotonic systems whenever only
finitely many unknowns (constraint variables) are encountered.
Practical remedies are proposed for termination in the non-monotonic case.
Beyond that, we also discuss extensions of the base local solver that allow
to further enhance precision such as localized application of the operator $\upd$
and restarting of the iteration for subsets of unknowns.

\end{abstract}


\begin{keyword}
Static Program Analysis \sep Fixpoint Iteration \sep Constraint Solving \sep Widening and Narrowing \sep Termination
\end{keyword}


\tnotetext[conf]{This article extends and generalizes results presented by 
\citet{Apinis13} by integrating key ideas from \citet{amato2013localizing}.} 

\tnotetext[support]{This work was partially supported by the ARTEMIS Joint 
Undertaking under grant agreement n° 269335 and from the German Science 
Foundation (DFG). }


\end{frontmatter}

\section{Introduction}

From an algorithmic point of view, static analysis typically boils down to solving 
systems of equations over a suitable domain of values. The unknowns of the system
correspond to the invariants to be computed, e.g., for each program point or for
each program point in a given calling context or instance of a class.
For abstract interpretation, often 
complete lattices are chosen as domains of 
(abstract) values \citep{CousotCousot77}. 
Practically, though, partial orders can be applied which are not necessarily complete lattices---given only 
that they support an effective binary upper bound operation. 
This is the case, e.g., for polyhedra \citep{CousotHalbwachs78} or zonotopes 
\citep{DBLP:conf/cav/GhorbalGP09}.
Still,
variants of Kleene iteration can be applied to determine solutions. 
Right from the beginning of abstract interpretation, it also has been observed
that many interesting invariants are best expressed by domains
that have \emph{infinite} strictly ascending chains. 
Possibly infinite strictly ascending chains, though,
imply that naive Kleene iteration may not terminate. 
For that reason, Cousot and Cousot 
proposed a \emph{widening} iteration to obtain a valid invariant or, technically speaking, a \emph{post} solution
which subsequently may be improved by means of a \emph{narrowing} iteration
\citep{CousotCousot76,CC92}.
The widening phase can be considered as a Kleene iteration that is accelerated by means
of a widening operator to enforce that only finitely many increases of values occur for every
unknown. While enforcing termination, it may result in a crude over-approximation of the invariants
of the program.
In order to compensate for that, the subsequent narrowing iteration tries to improve a given
post solution by means of a downward fixpoint iteration, which again may be accelerated, in this case by means
of a 
 \emph{narrowing} operator.

Trying to recover precision once it has been thrown away, though, in general
is not possible (see, e.g., \citep{Halbwachs12} for a recent discussion).
Some attempts try to improve precision
by reducing the number of points where widening is
applied \citep{Cousot81,Bourdoncle93}, while others rely on refined 
widening or narrowing operators (see, e.g., \citep{Simon06,Cortesi11}). 
Recently, several authors have focused on methods to guide or stratify the exploration of the state space
\citep{gopan_guided_2007,gopan_lookahead_2006,gulavani_automatically_2008,stratified,henry_succinct_2012},
 
including techniques for automatic transformation of irregular loops 
\citep{gulwani_control-flow_2009,sharma_simplifying_2011}
or by restricting the use of widening to relevant parts of the program state 
only \citep{Halbwachs12}.

Our approach here at least partly encompasses those in 
\citep{Cousot81,Bourdoncle93} while it is complementary
to the other techniques and can, possibly, be combined with these.
Our idea is to avoid to postpone narrowing to a
second phase where all losses of information have already occurred and been
propagated. Instead, an attempt is made to systematically improve the current
information immediately by downward iterations.
This means that increasing and decreasing iterations are
applied in an \emph{interleaved} manner.
A similar idea has already applied in syntax-directed fixpoint iteration engines
as, e.g., 
in
the static analyzers {\sc Astr\'ee} 
\citep{blanchet2003static,CousotEtAl06-ASIAN}
and {\sc Jandom} \citep{amato2013localizing}.
In order to enforce termination, ad-hoc techniques such as restrictions to the
number of updates are applied.
Here, we explore such iteration strategies in a generic setting and provide sufficient
conditions when particular fixpoint algorithms are guaranteed to terminate.

The original formulation of narrowing as considered in 
\citep{CousotCousot76,CC92},
requires right-hand sides of equations to be monotonic so that the improving second
iteration phase is guaranteed to be downward.
Accordingly, the narrowing operator is only guaranteed to return meaningful results when applied 
in \emph{decreasing} sequences of values.
%
As we concentrate to the algorithmic side, we refer to these original notions of narrowing,
opposed to the more elaborate notions of \citep{CousotCousot92} which 
additionally take the 
\emph{concrete} semantics of the system to be analyzed into account.
Still, the assumption of monotonicity of right-hand sides, even dis-regarding 
the occurrences of widening and narrowing operators, may not always be met.
So,
monotonicity can no longer be guaranteed, 
when compiling context-sensitive  inter-procedural analysis 
into systems of equations \citep{FS99,Apinis12}.
Moreover,
the resulting equation systems may be \emph{infinite}
and thus can be handled by \emph{local} solvers only.
Local solvers query the value of an interesting unknown
and explore the space of unknowns only as much as required for answering the query. 
For this type of algorithm, the set of unknowns to be evaluated is not known beforehand. 
Accordingly, the values of unknowns may be
queried in the narrowing phase that have not yet been encountered before. As a
consequence, the rigid two-phase iteration strategy of one widening iteration followed
by one narrowing iteration can no longer be maintained. 

%



In order to cope with these obstacles, 
we introduce an operator $\upd$ which is a generic combination of a given widening 
$\widen$ with a given narrowing operator $\narrow$ and show that this new operator 
can be plugged into any generic solver of equation systems, 
be they monotonic or non-monotonic. The $\upd$ operator behaves like narrowing 
as long as the iteration is descending, and like widening otherwise. 
As a result, solvers are obtained that return reasonably precise post solutions
in one go---given that they terminate. 

Termination, though, is indeed an issue. We present two simple example systems
of monotonic equations where standard fixpoint algorithms such as round robin or
work-list iteration, when enhanced with the new operator, fail to terminate.
Therefore, we develop a variant of round robin as well as a variant of work-list iteration which in absence
of widening and narrowing are not or at least not much worse than their standard 
counter parts---but which additionally are guaranteed to terminate when the $\upd$-operator is applied 
to monotonic systems.

The idea of plugging the new operator $\upd$ into a generic \emph{local} solver works as well.
A local solver such as \citep{HKS10}, however, 
is not generic in the sense of the present paper---meaning that a naive enhancement 
with the operator $\upd$ is no longer guaranteed to return sound results.
As our main contribution, we therefore present
a variation of this algorithm which 
always returns a (partial) post solution and, moreover,
is guaranteed to terminate---at least for monotonic equation systems
and if only finitely many unknowns are encountered.
This algorithm relies on \emph{self-observation} not only for identifying
dependencies between unknowns on the fly, but also to determine a suitable
prioritization of the unknowns.
This vanilla version of a local iterator then is extended to cope with the
losses in precision detected in \citep{amato2013localizing}.
We present novel techniques for
\emph{localizing} the use of the operator $\upd$ to loop heads only.
These loop heads are dynamically detected and recomputed depending on the status of 
the fixpoint computation. Interestingly, dynamically recomputing loop heads during
fixpoint computation increases precision significantly.
As another improvement, we also considered to dynamically restart the
iteration for subsets of unknowns during a narrowing sub-iteration.
These algorithms then are extended to solvers
for \emph{side-effecting} constraint systems. Side-effecting systems allow to conveniently
specify analyses that combine context-sensitive analysis of local information
with flow-insensitive analysis of globals \citep{Apinis12} as provided, e.g.,
by the program analyzer {\sc Goblint} \citep{VV07}.
Since the different contributions to a global unknown are generated during the evaluation
of a subset of right-hand sides, which is not known before-hand and may vary
during fixpoint iteration, further
non-trivial changes are required to handle this situation.

The obstacle remains that termination guarantees in presence of 
unrestricted
non-monotonicity cannot be given. We attack this obstacle by two means. 
By practical experiments, we provide evidence that our iterator as is, not 
only terminates 
but is reasonably efficient --- at least 
for the equation systems of an inter-procedural interval analysis of several 
non-trivial real-world programs.  Secondly, we remark that, globally and 
irrespective of accidental experiments, termination can be guaranteed by 
\emph{bounding} for each unknown the number of switches from narrowing back 
to widening, or, more smoothly, to apply more and more aggressive narrowing 
operators. Note that this family of
restrictions is more liberal than restricting the number of updates of each 
unknown directly.

%

The rest of the paper is organized as follows.
In Section~\ref{s:chaos}, we present the concept of generic solvers.
In Section~\ref{s:narrowing}, we show that any such solver, when instantiated with $\upd$, 
returns a post solution of an arbitrary equation system (be it monotonic or not)
whenever the solver terminates. 
In order to enforce termination at least for 
finite systems of monotonic equations, we provide
in Section~\ref{s:term} new generic variants of round-robin iteration as well
as of work-list based fixpoint computation.
Section~\ref{s:local} introduces the new generic local $\upd$-solver \SLR,
which is subsequently enhanced with localization of $\upd$ (Section~\ref{s:localized})
and restarting (Section~\ref{s:restart}).
All three local solvers then are generalized to equation systems with side effects in Section~\ref{s:side}.
In section~\ref{s:experimental}, we compare the local solvers w.r.t.\ to precision and efficiency
within the analyzer framework {\sc Goblint}
and conclude in Section~\ref{s:concl}.

Sections~\ref{s:chaos} to~\ref{s:local} and the first part of section 
\ref{s:side}  are based on \citep{Apinis13}. The 
extension of ordinary and generic local solving provided in Sections~\ref{s:localized} and~\ref{s:restart}, 
as well as the second half of Section~\ref{s:side} are new. Also the experimental evaluation in 
Section~\ref{s:experimental} has been redone completely.

%


\section{Chaotic fixpoint iteration}\label{s:chaos}

Consider a system $S$ of equations $x = f_x$, for a set of unknowns $x\in X$, 
and over a set $\mathbb D$ of values
where the right-hand sides $f_x$ are mappings $(X\to \mathbb D)\to \mathbb D$.
Furthermore, let $\Box: \mathbb D\to \mathbb D\to \mathbb D$ be a binary 
operator to combine old values with the new contributions
of the right-hand sides. 

A $\Box$-solution of $S$ is an assignment $\rho:{\it X}\to \mathbb D$ such that 
for all unknowns $x\in {\it X}$,
\[ \rho[x] = \rho[x]\Box f_x\,\rho .\]
In the case that $\Box$ is defined as ${a\Box b} = b$, a $\Box$-solution is an ordinary solution of the system, i.e.,
a mapping $\rho$ with $\rho[x] = f_x\,\rho$ for all unknowns $x$.

Most of the time $\mathbb D$ is a \emph{directed set}, i.e., a poset such that 
for 
each pair of elements  $a, b \in \mathbb D$, there exists an upper bound $z$ 
such that 
$z \sqsupseteq a$ and $z \sqsupseteq b$. We denote by $a \sqcup b$ a generic 
upper bound of $a$ and $b$. In case $\mathbb D$ is a directed set, and the 
$\Box$-operator is an upper bound, a $\Box$-solution is a \emph{post} solution 
of the system, i.e., 
a mapping $\rho$ with $\rho[x] \sqsupseteq f_x\,\rho$ for all unknowns $x$. 
Likewise in case $\mathbb D$ is a \emph{downward}-directed set and $\Box$ is a 
lower 
bound,  a $\Box$-solution is a \emph{pre} solution of the system, 
i.e., a mapping $\rho$ with $\rho[x] \sqsubseteq f_x\,\rho$ for all unknowns 
$x$.

The operator $\Box$ can also be instantiated with widening and narrowing 
operators. According to \citep{CousotCousot76,CousotCousot77,CC92}, 
a widening operator $\widen$ for a poset $\mathbb D$ must satisfy that
$a \sqsubseteq a\widen b$, $b \sqsubseteq a\widen b$ for all $a,b\in 
\mathbb D$, and any widening sequence cannot be strictly ascending.
This implies that a $\widen$-solution then again provides a post 
solution of 
the original system $S$.
The situation is slightly more complicated for narrowing operators.
For a narrowing operator $\narrow$, $a\sqsupseteq b$ implies that ${a\sqsupseteq (a\narrow b) \sqsupseteq b}$ and any narrowing sequence cannot be strictly descending.
This means that narrowing can only be applied if the right-hand side of equations are
guaranteed to return values that are less than or equal to the values of the current left-hand sides.
Thus a mapping $\rho$ can only be a $\narrow$-solution if it is a post solution of 
the system.

\begin{figure}[t]
\begin{minipage}[b]{0.5\linewidth}
  \centering
  \begin{lstlisting}[xleftmargin=16mm]
    do 
      dirty := false;
      foreach x IN X do
        new := RHO[x] BOX f_x RHO;
        if RHO[x] NOTEQ new then
          RHO[x] := new;
          dirty := true;
    while (dirty)
  \end{lstlisting}
  \caption{The solver \textbf{RR}.}\label{f:rr}
\end{minipage}
\begin{minipage}[b]{0.5\linewidth}
  \centering
  \begin{lstlisting}[xleftmargin=18mm]
    W := X;
    while W NOTEQ EMPTYSET $\,$do 
      x := extract(W);
      new := RHO[x] BOX f_x RHO;
      if RHO[x] NOTEQ new then
        RHO[x] := new;
        W := W CUP  infl_x
    done
  \end{lstlisting}
  \caption{The Solver \textbf{W}.}\label{f:worklist}
\end{minipage}
\end{figure}

A (chaotic) solver for systems of equations is an algorithm that maintains a mapping
$\rho:{\it X}\to \mathbb D$ and performs a sequence of \emph{update steps},
starting from an initial mapping $\rho_0$.
Each update step selects an unknown $x$, evaluates the right-hand side $f_x$
of $x$ w.r.t.\ the current mapping $\rho_i$ and updates the value for $x$, i.e.,  
\[\rho_{i+1}[y]=
  \begin{cases}
    \rho_i[x]\Box f_x\,\rho_i, &\text{ if }x=y\\
    \rho_i[y],                 &\text{ otherwise. }
  \end{cases}\]
Then the algorithm is a $\Box$-solver if upon termination the final mapping (after completing $n$ steps) $\rho_n$ 
is a $\Box$-solution of $S$.
The algorithm is a generic solver if it works for any binary update operator 
$\Box$.
In this sense, the round-robin iteration of Fig.~\ref{f:rr} is a generic solver.
Note that, in most cases, we omit update step indices and, additionally, use imperative 
assignment syntax of the form ${\rho[x]\gets w}$ to change the value of the unknown $x$ to $w$ 
in the mapping $\rho$.

In order to prove that a given algorithm is a generic solver,
i.e., upon termination returns a $\Box$-solution, one typically verifies the
invariant that for every terminating run of the algorithm producing the sequence 
$\rho_0,\rho_1,\ldots,\rho_n$ of mappings, and every unknown $x$,
$\rho_i[x]\neq\rho_i[x]\Box f_x\,\rho_i$ implies that
for some $j\geq i$, an update $\rho_{j+1}[x]=\rho_j[x]\Box f_x\,\rho_j$
occurs.

Not every solver algorithm, though, may consider right-hand sides of equations
as black boxes, as the round-robin algorithm does.
The worklist algorithm from Fig.~\ref{f:worklist} can only be used as generic solver---given that
all dependences are provided before-hand. 
This means that for each right-hand side $f_x$ a (super-)set ${\sf dep}_x$ of unknowns
is given such that for all mappings $\rho,\rho'$,
$f_x\,\rho = f_x\,\rho'$ whenever $\rho$ and $\rho'$ agree on all unknowns in ${\sf dep}_x$. 
From these sets, we define the sets ${\sf infl}_y$ of unknowns possibly
influenced by (a change of the value of) unknown $y$, i.e.,
\[
{\sf infl}_y = \{x\in X\mid y\in{\sf dep}_x\}\cup\{y\}\,.
\]
In the case that the value of some unknown $y$ changes, all right-hand sides of 
unknowns in the set ${\sf infl}_y$ must be re-computed.
Note that whenever an update to an unknown $y$ provides a new value, we 
re-schedule $y$ for evaluation as well. This is a precaution for
the case that the operator $\Box$ is \emph{not} (right) idempotent.
Here, an operator $\Box$ is called \emph{idempotent} if the following
equality:
\[
(a\;\Box\;b)\;\Box\;b = a\;\Box\;b
\]
holds for all $a,b$.
In this sense, the operators $\sqcup$ and $\sqcap$ are idempotent and often also 
$\widen$ and $\narrow$. An operator such as $\frac{a+b}{2}$, however,
for $a,b\in{\mathbb R}$ is not idempotent.

\section{Enhancing Narrowing}\label{s:narrowing}

First, we observe:

\begin{fact}
Assume that all right-hand sides of the system $S$ of equations over a poset 
$\mathbb D$ are monotonic and that $\rho_0$ is a post solution of $S$, and 
$\Box$ is a narrowing operator $\sqqcap$.
Then the sequence $\rho_0,\rho_1,\ldots$ of mappings 
produced by a generic $\Box$-solver, is defined
and decreasing.
\qed
\end{fact}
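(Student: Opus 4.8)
The plan is to prove the two assertions — that the sequence is well-defined and that it is decreasing — simultaneously by induction on $i$, strengthening the induction hypothesis to say that each $\rho_i$ is itself a post solution of $S$. The invariant to carry along the iteration is therefore: $\rho_i$ is a post solution, i.e.\ $\rho_i[x]\sqsupseteq f_x\,\rho_i$ for all $x\in X$, and $\rho_{i+1}\sqsubseteq\rho_i$ pointwise.

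First I would establish the base case: $\rho_0$ is a post solution by hypothesis. For the inductive step, suppose $\rho_i$ is a post solution and consider the update step that produces $\rho_{i+1}$ from $\rho_i$ by selecting some unknown $x$. Since $\rho_i[x]\sqsupseteq f_x\,\rho_i$, the pair $(\rho_i[x],\, f_x\,\rho_i)$ is a legal argument pair for the narrowing operator $\narrow$; hence $\rho_i[x]\narrow f_x\,\rho_i$ is defined, which gives well-definedness of $\rho_{i+1}$. By the defining property of $\narrow$ — that $a\sqsupseteq b$ implies $a\sqsupseteq(a\narrow b)\sqsupseteq b$ — we get $\rho_i[x]\sqsupseteq \rho_{i+1}[x]\sqsupseteq f_x\,\rho_i$. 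On all other unknowns $\rho_{i+1}$ agrees with $\rho_i$, so $\rho_{i+1}\sqsubseteq\rho_i$ pointwise, which is the ``decreasing'' claim.

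It remains to re-establish that $\rho_{i+1}$ is a post solution, so the induction can continue; this is the step I expect to be the crux, and it is exactly where monotonicity of the right-hand sides is used. For the updated unknown $x$ we already have $\rho_{i+1}[x]\sqsupseteq f_x\,\rho_i$, and since $\rho_{i+1}\sqsubseteq\rho_i$ and $f_x$ is monotonic, $f_x\,\rho_i\sqsupseteq f_x\,\rho_{i+1}$, so $\rho_{i+1}[x]\sqsupseteq f_x\,\rho_{i+1}$. For any other unknown $y\neq x$ we have $\rho_{i+1}[y]=\rho_i[y]\sqsupseteq f_y\,\rho_i\sqsupseteq f_y\,\rho_{i+1}$, again using the induction hypothesis together with monotonicity and $\rho_{i+1}\sqsubseteq\rho_i$. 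Hence $\rho_{i+1}$ is a post solution, closing the induction. The conclusion that the whole sequence $\rho_0,\rho_1,\ldots$ is defined and decreasing then follows for every $i$.

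(One small point to be careful about: the update in a generic solver uses $\rho_i[x]\mathbin{\Box}f_x\,\rho_i$, and here $\Box=\narrow$; the argument above only ever invokes the abstract narrowing axiom and monotonicity, so it does not matter which particular narrowing operator or which chaotic selection strategy is used — the statement holds for any generic $\narrow$-solver.)
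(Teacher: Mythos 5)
Your proof is correct and fills in precisely the argument the paper leaves implicit (the Fact is stated without proof). Strengthening the invariant to ``$\rho_i$ remains a post solution'' is exactly the right move: it is what makes the narrowing operator applicable at every step, and you correctly isolate monotonicity as the sole ingredient needed to re-establish the invariant after an update, both at the updated unknown and at the others.
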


\noindent
Thus, any generic solver can be applied to improve
a post solution by means of a narrowing iteration---given 
that all right-hand sides of equations are monotonic.

\medskip

Equation systems for context-sensitive inter-procedural analysis, though,
are not necessarily monotonic. In the following we show how to lift
the technical restrictions to the applicability of narrowing.
Given a widening operator $\widen$ and a narrowing operator $\narrow$,
we define a new binary operator $\upd$ by:
\begin{align*}
  a\upd b &= 
  \begin{cases}
    a \sqqcap b, &\text{ if } b\sqsubseteq a\\
    a \sqqcup b,  &\text{ otherwise. }
  \end{cases}
\end{align*}
Note that the operator $\upd$ is not necessarily idempotent, but
whenever narrowing is idempotent the following holds:
\begin{align*}
(a\;\upd\;b)\;\upd\;b &= (a\;\upd\;b)\;\narrow\;b\\
\intertext{and therefore also}
((a\;\upd\;b)\;\upd\;b)\;\upd\;b &= (a\;\upd\;b)\;\upd\;b\,.
\end{align*}
A fixpoint algorithm equipped with the operator $\upd$ applies 
widening as long as values grow. Once the evaluation of the
right-hand side of a unknown results in a smaller or equal value,
narrowing is applied and values may shrink.
For the operator $\upd$, we observe:

\begin{lemma}
Consider a finite system $S$ of equations over a directed set $\mathbb D$.
Then every $\upd$-solution $\rho$ of $S$ is a post solution, i.e., 
for all unknowns $x$, $\rho[x] \sqsupseteq f_x\,\rho$.
\end{lemma}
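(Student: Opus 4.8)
The plan is to argue pointwise: fix an arbitrary unknown $x$ and show $\rho[x] \sqsupseteq f_x\,\rho$, using the defining equation $\rho[x] = \rho[x] \upd f_x\,\rho$ together with a case analysis on the definition of $\upd$. Write $a = \rho[x]$ and $b = f_x\,\rho$, so the hypothesis reads $a = a \upd b$. There are two cases according to whether $b \sqsubseteq a$ or not.

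**First**, suppose $b \sqsubseteq a$. Then immediately $\rho[x] = a \sqsupseteq b = f_x\,\rho$, which is exactly what we want; this case needs nothing about $\upd$ at all. **Second**, suppose $b \not\sqsubseteq a$. Then by definition $a \upd b = a \sqqcup b$, so the equation $\rho[x] = \rho[x] \upd f_x\,\rho$ becomes $a = a \sqqcup b$. Since $\sqqcup$ is a widening operator, it satisfies $b \sqsubseteq a \sqqcup b$; combined with $a \sqqcup b = a$ this yields $b \sqsubseteq a$, contradicting the case assumption $b \not\sqsubseteq a$. Hence the second case is vacuous, and in every case $\rho[x] \sqsupseteq f_x\,\rho$.

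**The only subtlety** — and the step I would be most careful about — is making sure the second case really is impossible rather than merely giving a bound in the wrong direction: one must invoke the widening axiom $b \sqsubseteq a \sqqcup b$ to derive the contradiction. (One could alternatively phrase the whole argument without contradiction: if $b \not\sqsubseteq a$ then $a = a \sqqcup b \sqsupseteq b$ by the widening property, so in fact $b \sqsubseteq a$ always holds for a $\upd$-solution, and then $\rho[x] = a \sqsupseteq b$.) Note that finiteness of $S$ and directedness of $\mathbb D$ are not actually needed for this particular statement — the argument is entirely local to each equation — though they are presumably relevant for the termination and soundness results that follow; I would remark on this but not belabor it.
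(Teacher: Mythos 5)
Your proof is correct and is essentially the paper's proof: the paper also argues by contradiction, assuming $\rho[x] \not\sqsupseteq f_x\,\rho$, observing that $\upd$ then reduces to $\sqqcup$, and invoking the widening axiom $b \sqsubseteq a \sqqcup b$ to conclude $\rho[x] \sqsupseteq f_x\,\rho$. Your side remark that finiteness of $S$ and directedness of $\mathbb D$ are not actually used here is also accurate.
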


\begin{proof}
Consider a mapping $\rho$ that is a $\upd$-solution
of $S$ and an arbitrary unknown $x$. 
For a contradiction assume that $\rho[x] \not\sqsupseteq f_x\,\rho$.
But then we have:
\[
\rho[x] \;=\; \rho[x]\upd f_x\,\rho \;=\; \rho[x]\widen f_x\,\rho \;\sqsupseteq\; f_x\,\rho
\]
in contradiction to our assumption!
Accordingly, $\rho$ must be a post solution of the system of equations $S$.
\qed
\end{proof}

\noindent
Thus, every generic solver for directed sets $\mathbb D$ can be turned into a 
solver computing post solutions by using the
combined widening and narrowing operator.
%
The intertwined application of widening and narrowing, which naturally
occurs when solving the system of equations by means of $\upd$, 
has the additional advantage that values may also \emph{shrink} in-between.
Improving possibly too large values, thus, may take place immediately
resulting in overall smaller, i.e., better post solutions.
Moreover, no restriction is imposed any longer concerning monotonicity
of right-hand sides.

\section{Enforcing termination}\label{s:term}

For the new operator $\upd$, termination 
cannot generally be guaranteed for all solvers. In this section, we therefore present a
modification of worklist iteration which is guaranteed to terminate---given 
that all right-hand sides of equations are monotonic.

\noindent
\begin{example}\label{e:term}
Consider the system:
\[
\begin{array}{lll}
x_1 &=& x_2 \\
x_2 &=& x_3 + 1 \\
x_3 &=& x_1
\end{array}
\]
with $\mathbb D={\mathbb N}\cup\{\infty\}$, the lattice of non-negative 
integers, equipped with the
natural ordering $\sqsubseteq$ given by $\leq$ and extended with $\infty$.
Consider a widening $\widen$ where
${a\widen b = a}$ if ${a= b}$ and ${a\widen b = \infty}$ otherwise,
together with a narrowing $\narrow$ where, for ${a\geq b}$,
 ${a\narrow b = b}$ if ${a=\infty}$, and ${a\narrow b = a}$ otherwise.
%
Round-robin iteration with the operator $\upd$ for this system starting from the mapping
${\rho_0 = \{x_1\mapsto 0, x_2\mapsto 0, x_3\mapsto 0\}}$,
will produce the following sequence of mappings:
	\[
        \setlength{\arraycolsep}{5pt}
	\begin{array}{c|c|c|c|c|c|c|c}
            & 0 & 1      & 2      & 3      & 4      & 5      &       \\
	\hline
        x_1 & 0 & 0      & \infty & 1      & \infty & 2      & \dots \\
        x_2 & 0 & \infty & 1      & \infty & 2      & \infty & \dots \\
        x_3 & 0 & 0      & \infty & 1      & \infty & 2      & \dots \\
	\end{array}
	\]
Iteration does not terminate---although right-hand sides are monotonic. \qed
\end{example}

\noindent
A similar example shows that ordinary worklist iteration,
enhanced with $\upd$, also may not terminate, even if all equations
are monotonic.

\begin{example}\label{e:w}
Consider the two equations:
\begin{align*}
x_1 &= (x_1+1) \sqcap (x_2+1)	\\
x_2 &= (x_2+1) \sqcap (x_1+1)	
\end{align*}
using the same lattice as in Example~\ref{e:term}
where $\sqcap$ denotes minimum, i.e., the greatest lower bound.
Assume that 
the work-set is maintained with a lifo discipline. For $W = [x_1,x_2]$, worklist iteration,
starting with the initial mapping $\rho_0=\{x_1\mapsto 0, x_2\mapsto 0\}$,
results in the following iteration sequence:
\[
\begin{array}{l|c|c|c|c|c|c|c|c|c}
  W\!\!
&\scalebox{0.7}{\!\!$[x_1,x_2]$\!\!}
&\scalebox{0.7}{\!\!$[x_1,x_2]$\!\!}
&\scalebox{0.7}{\!\!$[x_1,x_2]$\!\!}
&\scalebox{0.7}{$[x_2]$}
&\scalebox{0.7}{\!\!$[x_2,x_1]$\!\!}
&\scalebox{0.7}{\!\!$[x_2,x_1]$\!\!}
&\scalebox{0.7}{$[x_1]$}
&\scalebox{0.7}{\!\!$[x_1,x_2]$\!\!} \\ \hline
x_1 & 0 & \infty & 1 & 1 & 1   & 1 & 1 & \infty & \ldots\!\\
x_2 & 0 & 0   & 0 & 0 & \infty & 2 & 2 & 2   & \ldots\!	
\end{array}
\]
which does not terminate.
\qed
\end{example}

We present modified versions of the round-robin solver as
well as the worklist solver for which termination can be guaranteed.
The worst case complexity for the new round-robin solver turns out to be faster than ordinary round-robin iteration, even
by a factor of 2.
For the new worklist solver, theoretical complexity is
at least not far away from the classical iterator.

For both algorithms, we assume that we are given a fixed \emph{linear ordering}
on the set of unknowns so that $X = \{x_1,\ldots,x_n\}$. 
The ordering will affect the iteration strategy, and therefore,
as shown by \citet{Bourdoncle90}, has a significant impact on performance. 
Hence, the linear ordering should be chosen in a way that innermost loops would be 
evaluated before iteration on outer loops. 
For unknowns $x_i$ and the system of equations given by
$x_i = f_i$, for $i=1,\ldots,n$, the new round-robin algorithm
is shown in Fig.~\ref{f:srr}.

\begin{wrapfigure}{r}{0.35\textwidth}
  \centering
  \begin{lstlisting}[xleftmargin=6mm]
    void solve(i) {
      if i = 0 then return;
      solve(i-1);
      new := RHO[x_i] BOX f_i RHO;
      if RHO[x_i] NOTEQ new then
        RHO[x_i] := new;
        solve(i);
    }
  \end{lstlisting}
  \caption{The new solver \textbf{SRR}.}\label{f:srr}
\end{wrapfigure}

Let us call the new algorithm {\bf SRR} (\emph{structured} round-robin).
For a given initial mapping $\rho_0$, structured round-robin
is started by calling ${\sf solve}\,n$.
The idea of the algorithm is, when called for a number $i$,
to iterate on the unknown $x_i$ until stabilization.
Before every update of the unknown $x_i$, however, 
all unknowns $x_j,j<i$ are recursively solved.
Clearly, the resulting algorithm is a generic $\Box$-solver.

Recall that a poset ${\mathbb D}$ has height $h$ if $h$
is the maximal length of a strictly increasing chain
$d_0 \sqsubset d_1\sqsubset\ldots\sqsubset d_h$.
We find:

\begin{theorem}
\label{th:SRR}
Consider a system of $n$ equations
over a directed set $\mathbb D$ where all right-hand sides are monotonic.
Then the following holds for the algorithm {\bf SRR} when started on any mapping:
\begin{enumerate}
\item	If $\mathbb D$ has bounded height $h$ and ${\Box = 
\sqcup}$, then
	{\bf SRR} terminates with a post-solution
	after at most $n+\frac{h}{2}n(n+1)$ evaluations
	of right-hand sides $f_i$.
\item In presence of possibly unbounded ascending chains, when instantiated   
with $\Box=\upd$, {\bf SRR} terminates with a post-solution.
\end{enumerate}
\end{theorem}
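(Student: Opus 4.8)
The plan is to treat the two parts separately, since part (1) is a clean complexity count for a well-behaved operator and part (2) is a termination argument for the composite operator $\upd$.

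For part (1), I would argue by induction on $i$ that a call to ${\sf solve}\,i$ terminates, and simultaneously bound the number of right-hand side evaluations it performs. The key structural observation is that each top-level invocation ${\sf solve}\,i$ begins with a recursive call ${\sf solve}(i-1)$, and then re-invokes itself only when the value of $x_i$ has \emph{strictly} increased (since $\Box=\sqcup$, the new value is $\sqsupseteq$ the old, and it is recorded only if it differs, hence strictly larger). Because $\mathbb D$ has height $h$, the value of $x_i$ can strictly increase at most $h$ times before stabilising, so ${\sf solve}\,i$ makes at most $h+1$ evaluations of $f_i$ itself, interleaved with at most $h+1$ calls to ${\sf solve}(i-1)$. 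Let $T(i)$ denote the worst-case number of right-hand-side evaluations triggered by ${\sf solve}\,i$. Then $T(0)=0$ and $T(i)\le (h+1)(1+T(i-1))$ is the crude bound, but that is exponential; the paper's linear bound $n+\frac{h}{2}n(n+1)$ must come from a sharper amortised accounting. The sharper idea: charge each evaluation of $f_j$ to the last strict increase of some unknown $x_k$ with $k\ge j$ that made it necessary, or — cleaner — observe that between two consecutive evaluations of $f_i$ at the top level, the unknowns $x_1,\dots,x_{i-1}$ are already stable (an invariant maintained by the nested ${\sf solve}(i-1)$), so the recursive call ${\sf solve}(i-1)$ after a change to $x_i$ does work proportional only to how much the $x_j$'s can still rise, which is globally bounded. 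I would formalise this by a global potential: the total number of strict increases over all unknowns during the whole run is at most $hn$, and each such increase at position $j$ forces at most the chain $j, j+1, \dots$ of re-evaluations, giving the quadratic-in-$n$, linear-in-$h$ bound with the stated constant $\tfrac h2$ after careful summation $\sum_{i} i$. The main obstacle in part (1) is precisely pinning down this amortised count so that the constant comes out as $\tfrac h2$ rather than $h$; this is the delicate bookkeeping and will need the invariant that entering ${\sf solve}\,i$, all lower unknowns are stable.

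For part (2), $\mathbb D$ is only a directed set with possibly unbounded ascending chains, and $\Box=\upd$. Termination of ${\sf solve}\,i$ is again shown by induction on $i$; the base case $i=0$ is trivial. For the inductive step, assume ${\sf solve}(i-1)$ always terminates. Inside ${\sf solve}\,i$ we repeatedly: stabilise $x_1,\dots,x_{i-1}$, then update $x_i$ via $\rho[x_i]\upd f_i\,\rho$. By the definition of $\upd$, each such update is either a narrowing step ($f_i\,\rho\sqsubseteq\rho[x_i]$, value goes down or stays) or a widening step (value goes strictly up, and by the widening axiom no infinite strictly-ascending widening sequence is possible). The argument is that the sequence of values taken by $x_i$ at successive top-level updates cannot be infinite: by the widening property it can only go up finitely often; by the narrowing property any maximal descending run is finite; and — crucially, using monotonicity of the $f_j$ — once $x_i$ stops increasing, the whole configuration below is monotone, so values of the lower unknowns, re-stabilised each round, eventually stop changing too, after which $f_i\,\rho$ becomes constant and the loop exits. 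I would make this precise with the well-quasi-order-style observation that the $x_i$-values, ordered by the order in which they appear, form a sequence in which one cannot have an infinite strictly ascending subsequence (widening) nor an infinite strictly descending one (narrowing), and the number of "direction switches" from narrowing back to widening is what must be bounded — here monotonicity of right-hand sides guarantees that after the last widening, the iteration on $x_i$ is genuinely a decreasing narrowing iteration (this is essentially the Fact stated earlier, applied to the sub-system on $x_i$ with the lower unknowns stabilised), hence terminates. The main obstacle in part (2) is justifying that monotonicity survives the nesting: one must check that when ${\sf solve}\,i$ calls ${\sf solve}(i-1)$ to stabilise the lower block, the resulting "effective right-hand side" for $x_i$ — namely $d\mapsto f_i(\rho[x_i\mapsto d]$ after stabilising $x_1,\dots,x_{i-1})$ — is still monotone in $d$, so that the single-variable widening/narrowing termination argument applies. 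That monotonicity of the stabilised-block map is the heart of the proof and is where I expect to spend the most care.
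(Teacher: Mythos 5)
Your plan for part~(1) is essentially the paper's argument: the crude recursion is abandoned in favour of an amortised count where every re-evaluation of $f_i$ is charged either to the initial call or to a strict increase of some higher-indexed unknown $x_k$ (with $k>i$), of which there are at most $h$ each; summing the $h\cdot(n-i)+1$ overhead evaluations per unknown plus the $hn$ increasing evaluations yields exactly $n+\frac{h}{2}n(n+1)$. (One sentence in your write-up has the charging backwards -- a strict increase at position $j$ forces re-evaluations of $x_{j-1},\ldots,x_1$, not $x_j,x_{j+1},\ldots$ -- but your stated charging rule $k\geq j$ is the right one and gives the right sum.)

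For part~(2) there is a genuine gap, and it is precisely the step you flagged as ``the heart of the proof.'' You propose to reduce to a single-variable widening/narrowing argument by showing that the stabilised-block map $d\mapsto f_i\bigl(\text{stabilise}(\rho[x_i\mapsto d])\bigr)$ is monotone. That map is \emph{not} monotone in general: the stabilisation of the lower block uses the operator $\upd$, hence widening and narrowing, both of which destroy monotonicity. If your proof hinges on that claim it will not go through. The paper's proof avoids the reduction entirely. It argues by induction on $i$, and within ${\sf solve}\,i$ it distinguishes two cases. If at some iteration $f_i\,\rho\sqsubseteq\rho[x_i]$ (a narrowing step on $x_i$ occurs), then at that moment the post-condition of ${\sf solve}(i-1)$ gives $\rho[x_j]\sqsupseteq f_j\,\rho$ for all $j<i$; lowering $\rho[x_i]$ preserves these inequalities by monotonicity of the \emph{raw} $f_j$ (not of any stabilised map), so the subsequent ${\sf solve}(i-1)$ is a pure narrowing descent, after which $f_i\,\rho$ has only decreased and the next step on $x_i$ is again narrowing; by induction all remaining updates on $x_1,\ldots,x_i$ are narrowing and the iteration terminates. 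Otherwise every update to $x_i$ is a widening step, giving a strictly increasing widening sequence on $x_i$ (with terminating ${\sf solve}(i-1)$ calls interleaved by the inductive hypothesis), which must stabilise. In short: the monotonicity that matters is that of each $f_j$ together with the preserved invariant $\rho[x_j]\sqsupseteq f_j\,\rho$ over the whole lower block, not monotonicity of the collapsed single-variable map -- substituting the former for the latter is what closes the gap.
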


\noindent
The first statement indicates that {\bf SRR} may favorably compete with
ordinary round robin iteration in case that no widening and narrowing is required.
The second statement, on the other hand, provides us with a termination guarantee --- 
whenever only all right-hand sides are monotonic.

\begin{proof}
Recall that ordinary round robin iteration for directed sets of bounded height performs
at most $h\cdot n$ rounds due to increases of values of unknowns
plus one extra round to detect termination, giving in total
\[
n+ h\cdot n^2
\]
evaluations of right-hand sides. 
In contrast for structured round robin iteration, termination for
unknown $x_i$ requires one evaluation when ${\sf solve}\:i$ is called for the
first time and then one further evaluation for every update of one of the
unknowns $x_n,\ldots, x_{i+1}$. This sums up to ${h\cdot(n-i)+1}$ evaluations 
throughout the whole iteration.
This gives a overhead of
\[
n + h\cdot\sum_{i=1}^n (n-i) =
n + \frac{h}{2}\cdot n\cdot(n-1) \ .
\]
Additionally, there are $h \cdot n$ evaluations that
increase values. In total, the number of evaluations,
therefore, is
\[
n + \frac{h}{2}\cdot n\cdot(n-1) + h\cdot n =
n+ \frac{h}{2} \cdot n\cdot(n+1)
\]
giving us statement 1.

For the second statement, we proceed by induction on $i$. 
The case $i=0$ is vacuously true.  For the inductive step, assume that $i>0$ and 
for all $j<i$, ${\sf solve}\;j$ terminates for any mapping.
To arrive at a contradiction, assume that ${\sf solve}\;i$ for the current
mapping $\rho$ does not terminate.
First, consider the case where $f_i\;\rho$ returns a value smaller than
$\rho[x_i]$. Since \textbf{SRR} is a generic solver, we have for all $j<i$, 
${\rho[x_j] = \rho[x_j]\upd f_j\;\rho}$, implying that
$\rho[x_j]\sqsupseteq f_j\;\rho$ for all $j<i$. After $\rho[x_i]$ is updated, 
by monotonicity, it still holds that $\rho[x_j]\sqsupseteq f_j\;\rho$ for
all $j<i$. Solving for the unknown $i-1$ will only cause further 
descending steps, where $\upd$ behaves like $\narrow$. The subsequent 
iteration of ${\sf solve}\;i$ will produce a decreasing sequence of mappings.
Since all decreasing chains produced by narrowing are ultimately stable,
the call ${\sf solve}\;i$ will terminate---in contradiction to our assumption.

Therefore, non-termination is only possible if during the whole run of ${\sf solve}\;i$, 
evaluating $f_i\;\rho$
must always return a value that is not subsumed by $\rho[x_i]$.
Since all calls ${\sf solve}\;(i-1)$ in-between terminate by the induction hypothesis, 
a strictly increasing sequence of values for $x_i$ is obtained 
that is produced by repeatedly applying the widening operator. 
Due to the properties of widening operators, any such sequence is eventually
stable---again in contradiction to our assumption.
We thus conclude that ${\sf solve}\;i$ is eventually terminating.
\qed
\end{proof}

\noindent
\begin{example}
Recall the equation system, for which round-robin iteration did not terminate. 
With structured round-robin iteration, however, we 
%
obtain the following sequence of updates:
\vspace{0.2em}
\[
\begin{array}{c|c|c|c|c|c|c|c|c|}
i            &\;\;\;\;& 2      & 1      & 2      & \,1\,\, & 3      & 2      & 1      \\ \hline
x_1          & 0& 0      & \infty & \infty & 1 & 1      & 1      & \infty \\
x_2          & 0& \infty & \infty & 1      & 1 & 1      & \infty & \infty \\ 
x_3          & 0& 0      &      0 & 0      & 0 & \infty & \infty & \infty 
\end{array}
\vspace{0.3em}
\]
where the evaluations of unknowns not resulting in an update
have been omitted.
Thus, structured fix-point solving quickly stabilizes for this example. \qed

\end{example}

The idea of structured iteration can also be lifted to worklist
iteration.
Consider again a system ${x_i = f_i}$, for ${i=1,\ldots,n}$, of equations.
As for the ordinary worklist algorithm, we assume that  
for each right-hand side $f_i$ a (super-)set ${\sf dep}_i$ of unknowns
is given, such that for all mappings $\rho,\rho'$,
$f_i\,{\rho = f_i\,\rho'}$ whenever $\rho$ and $\rho'$ agree on all unknowns 
in ${\sf dep}_i$. As before for each unknown $x_j$, let ${\sf infl}_j$ denote 
the set consisting of the unknown $x_j$ together with all unknowns influenced by
$x_j$.
Instead of a plain worklist, the modified algorithm maintains the set of 
unknowns to be reevaluated, within a \emph{priority queue} $Q$.
In every round, not an arbitrary element is extracted from $Q$ --- but the unknown
with the least index.
The resulting algorithm is presented in Fig.~\ref{f:sw}.

\begin{wrapfigure}{r}{0.42\textwidth}
  \centering
  \begin{lstlisting}[xleftmargin=7mm]
    Q := EMPTYSET;
    for i = 1 to n do add Q x_i;
    while Q NOTEQ EMPTYSET do
      x_i := extract_min(Q);
      new := RHO[x_i] BOX f_i RHO;
      if RHO[x_i] NOTEQ new then
        RHO[x_i] := new;
        foreach x_j IN infl_i $\;$do 
          add Q x_j
    done
  \end{lstlisting}
  \caption{The new solver \textbf{SW}.}\label{f:sw}
\end{wrapfigure}

Here, the function ${\sf add}$ inserts an element into the priority queue
or leaves the queue unchanged if the element is already present.
Moreover, the function ${\sf extract\_min}$ removes the unknown
with the smallest index from the queue and returns it as result.

Let us call the resulting algorithm {\bf SW} (structured worklist iteration).
Clearly, the resulting algorithm is a generic solver for systems of equations where
the dependences between unknowns are explicitly given.

\noindent
\begin{example}\label{e:w-1}
Consider again the system from example \ref{e:w}.
Structured worklist iteration using $\upd$
for this system results in the following iteration:
\[
\begin{array}{l|c|c|c|c|c|c|c|c|}
Q
&\scalebox{0.7}{\!\!$[x_1,x_2]$\!\!}
&\scalebox{0.7}{\!\!$[x_1,x_2]$\!\!}
&\scalebox{0.7}{\!\!$[x_1,x_2]$\!\!}
&\scalebox{0.7}{\;$[x_2]$\;}
&\scalebox{0.7}{\!\!$[x_1,x_2]$\!\!}
&\scalebox{0.7}{\!\!$[x_1,x_2]$\!\!}
&\scalebox{0.7}{\;$[x_2]$\;}
&\scalebox{0.7}{$[]$}	\\ \hline
x_1 & 0 & \infty & 1 & 1 & 1   & \infty  & \infty & \infty \\
x_2 & 0 & 0   & 0 & 0 & \infty & \infty  & \infty & \infty
\end{array}
\]
and thus terminates.
\qed
\end{example}

\noindent
In general, we have:

\begin{theorem}
\label{th:SW}
Assume the algorithm {\bf SW} is applied to a system of equations
over a directed set ${\mathbb D}$ and
that each right-hand side is monotonic.
\begin{enumerate}
\item	Assume that the maximal length of a strictly
        ascending chain is bounded by $h$.
	When instantiated with ${\Box=\sqcup}$,  
	the algorithm terminates after at most
	${h\cdot{}N}$ evaluations of right-hand sides where
	$N{=}\sum_{i=1}^n (2+ \left|{\sf dep}_i\right|)$.
\item	When instantiated with $\Box = \upd$ 
	and started on any mapping, the algorithm 
	is guaranteed to terminate and, thus, always to return
	a post solution.
\end{enumerate}
\end{theorem}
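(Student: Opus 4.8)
The plan is to mirror the structure of the proof of Theorem~\ref{th:SRR}, exploiting the fact that {\bf SW} always extracts the unknown of \emph{least} index from $Q$. First I would establish the counting bound in statement~1 by the standard worklist argument: each evaluation of a right-hand side $f_i$ is triggered either by the initial insertion of $x_i$ or because some $x_j$ with $x_i\in{\sf infl}_j$ was updated; since $\Box=\sqcup$ and ${\mathbb D}$ has ascending chains of length at most $h$, each unknown $x_j$ can be updated at most $h$ times, so $x_i$ is re-evaluated at most $h\cdot|\{j\mid x_i\in{\sf infl}_j\}|$ times plus the evaluation(s) from its own updates and its initial evaluation. Summing and regrouping by $\sum_{i}|{\sf dep}_i|$ (each membership $x_i\in{\sf dep}_j$ contributes one potential re-evaluation of $x_j$ per change of $x_i$) yields the bound $h\cdot N$ with $N=\sum_{i=1}^n(2+|{\sf dep}_i|)$. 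This part is routine bookkeeping.

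The substance is statement~2. I would argue by contradiction: suppose the run is infinite. Let $k$ be the \emph{least} index such that $x_k$ is updated infinitely often (such $k$ exists, since an infinite run must update \emph{some} unknown infinitely often, as between consecutive updates only finitely many evaluations occur). By minimality, there is a time $t_0$ after which no $x_j$ with $j<k$ is ever updated again. The key structural observation is that, because {\sf extract\_min} always prefers smaller indices, after $t_0$ the queue-discipline never lets $x_k$ be evaluated while some $x_j$ with $j<k$ is pending; and since those $x_j$ are never updated after $t_0$, once they drain out of $Q$ they re-enter only as a side effect of updates to unknowns of index $\ge k$ — but such an update to $x_j$, $j<k$, would contradict the choice of $t_0$ unless the re-evaluation of $f_j$ produces no change. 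Hence after $t_0$ we may assume all unknowns $x_j$, $j<k$, satisfy $\rho[x_j]=\rho[x_j]\upd f_j\,\rho=\rho[x_j]\sqcap f_j\,\rho$, i.e.\ $\rho[x_j]\sqsupseteq f_j\,\rho$, and this relation is preserved under any subsequent update of $x_k$ by monotonicity of the $f_j$.

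Now I focus on the infinitely many updates of $x_k$ after $t_0$. Each such update replaces $\rho[x_k]$ by $\rho[x_k]\upd f_k\,\rho$, which is either $\rho[x_k]\sqcap f_k\,\rho$ (a descending step, when $f_k\,\rho\sqsubseteq\rho[x_k]$) or $\rho[x_k]\sqcup f_k\,\rho$ (a widening/ascending step). If from some point on all these steps are descending, then by the properties of the narrowing operator $\narrow$ — no strictly descending narrowing sequence exists — the value of $x_k$ stabilizes, contradicting infinitely many \emph{changes}. If infinitely many ascending (widening) steps occur, I need to see a strictly ascending widening sequence for $x_k$; here the delicate point — and I expect this to be the main obstacle — is that widening steps for $x_k$ may be interleaved with descending ones, so the raw sequence of $\rho[x_k]$ values is not monotone and the widening-termination property does not apply directly. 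The resolution, as in Example~\ref{e:w-1} versus Example~\ref{e:w}, is precisely the structured priority discipline: I would show that between two consecutive \emph{widening} updates of $x_k$, every update of an unknown of index $>k$ that feeds back into $f_k$ is itself recomputed in index order, so that when $x_k$ is next selected the value $f_k\,\rho$ reflects a fully propagated state; one then argues that a widening step of $x_k$ cannot be immediately undone by a later descending step of $x_k$ without an intervening genuine increase elsewhere, and since the indices $>k$ contribute only boundedly (they are updated finitely often, by minimality of $k$ applied to the \emph{finite} set $\{k+1,\dots,n\}$ — wait, they may be infinite too; more carefully, one reruns the same least-index argument relative to the window of activity). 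Concretely, I would define the subsequence of configurations taken immediately before each widening update of $x_k$, show it is non-decreasing in $\rho[x_k]$ using that descending steps only lower $x_k$ and each widening step raises it strictly, and conclude it is a widening sequence, hence eventually constant — the final contradiction. Combining both cases, no unknown is updated infinitely often, so {\bf SW} terminates, and by Lemma~2 (every $\upd$-solution is a post solution) together with the generic-solver property, it returns a post solution.
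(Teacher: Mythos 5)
Your account of statement~1 is the standard bookkeeping argument and matches the paper in substance, so that part is fine. For statement~2, however, your decomposition is the wrong way round, and the gap you yourself notice mid-proof is genuine and not repaired by the sketch that follows.

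You pick the \emph{least} index $k$ that is updated infinitely often, freeze everything below it, and try to argue about $x_k$. The problem, as you observe, is that unknowns with index greater than $k$ may also be updated infinitely often; since $f_k$ can depend on them, the sequence of values $f_k\,\rho_i$ seen at successive extractions of $x_k$ is not under any control, and the updates to $x_k$ genuinely interleave widening and narrowing. Your proposed repair --- take the subsequence of configurations just before each widening update of $x_k$ and claim it is non-decreasing in $\rho[x_k]$ --- does not hold: after a narrowing update $\rho[x_k]$ drops, so the value just before the \emph{next} widening update can be strictly smaller than the value just before an \emph{earlier} one. So this is not a widening sequence and the termination property of $\widen$ does not apply.

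The paper avoids the issue by inducting on $n$ and isolating the unknown with the \emph{largest} index, $x_n$. The priority discipline then guarantees that whenever $x_n$ is extracted, the queue is otherwise empty, so every $x_r$ with $r<n$ satisfies $\rho_i[x_r]\sqsupseteq f_r\,\rho_i$ at that moment. If $x_n$ is extracted only finitely often, freeze its last value $d$ and apply the inductive hypothesis to the $(n-1)$-unknown system with right-hand sides $f'_r\,\rho=f_r(\rho\oplus\{x_n\mapsto d\})$. If $x_n$ is extracted infinitely often, then either at some extraction $f_n\,\rho_i\sqsubseteq d_i$, after which monotonicity forces \emph{all} values for $r\leq n$ to decrease from then on, making the remainder a pure narrowing iteration; or this never happens, and the $d_i$ form a strictly increasing pure widening chain. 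Both possibilities are finite. In other words, choosing the top index is what eliminates the interleaving that your proof cannot handle: below the top unknown everything is momentarily post-solved at each extraction, so the sequence at the top is either purely ascending under $\widen$ or, from the first descent on, purely descending under $\narrow$. You should restructure statement~2 as that induction rather than as a least-infinitely-updated-index argument.
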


\noindent
The first statement of the theorem indicates that {\bf SW}
behaves complexity-wise like ordinary worklist iteration:
in case that the directed set ${\mathbb D}$ has finite height,
the only overhead to be paid for is an extra logarithmic factor for maintaining
the priority queue.
The second statement, perhaps, is more surprising: it provides
us with a termination guarantee for arbitrary lattices and the operator $\upd$ --- 
whenever only all right-hand sides are monotonic.

\begin{proof}
We proceed by induction on the number $n$ of unknowns. The case $n=1$ is true
by definition of widening and narrowing.
For the induction step assume that the assertion holds for systems of equations
of $n-1$ unknowns. Now consider a system of equations for a set $X$ of
cardinality $n$, and assume that $x_n$ is the unknown which is
larger than all other unknowns in $X$.

For a contradiction assume that {\bf SW} does not terminate for the system
of equations for $X$.
First assume that the unknown $x_n$ is extracted from the queue $Q$ only finitely many
times, say $k$ times where $d$ is the last value computed for $x_n$.
This means that after the last extraction, 
an infinite iteration occurs on the subsystem on the unknowns 
$X' = X\setminus\{n\}$ where for $x_r\in X'$, the right-hand side
is given by $f'_r\,\rho = f_r\,(\rho\oplus\{x_n\mapsto d\})$.
By inductive hypothesis, however, the algorithm {\bf SW} for this
system terminates --- in contradiction to our assumption.

Therefore, we may assume that the unknown $x_n$ is extracted infinitely often from 
$Q$. Let $\rho_i,i\in{\mathbb N},$ denote the sequence of mappings
at these extractions. Since $Q$ is maintained as a priority queue, 
we know that
for all unknowns $x_r$ with $r < n$, the inequalities $\rho_i[x_r]\sqsupseteq f_r\,\rho_i$ hold.
Let $d_i = \rho_i[x_n]$. 
If for any $i$, $f_n\,\rho_i \sqsubseteq d_i$, the next value $d_{i+1}$ for $x_n$ then is
obtained by $d_{i+1} = d_i\narrow f_n\,\rho_i$ which is less or equal to $\rho_i$.
By monotonicity, this implies that in the subsequent iteration, the values for
all unknowns $x_r, r\leq n,$ may only decrease. 
The remaining iteration is a pure
narrowing iteration and therefore terminates.
In order to obtain an infinite sequence of updates for $z$, we conclude that
for no $i$, $f_n\,\rho_i \sqsubseteq d_i$. Hence for every $i$,
$d_{i+1} = d_i\widen f_n\,\rho_i$ where $d_i\sqsubseteq d_{i+1}$.
This, however, is impossible due to the properties of the widening operator.
In summary, we conclude that $x_n$ is extracted only finitely often from $Q$.
Hence the fixpoint iteration terminates.
\qed
\end{proof}
	
\noindent
The algorithm {\bf SW} can
also be applied to non-monotonic systems. There, however, termination
can no longer be guaranteed. 
In fact, the assumption of monotonicity is not a defect of our solvers {\bf SRR} or {\bf SW}, 
but inherent to \emph{any} terminating fixpoint iteration which intertwines widening and narrowing.

\noindent
\begin{example}\label{e:non}
Consider the single equation:
\[
	{\sf x} = {\bf if}\;({\sf x} = 0)\;{\bf then}\;1\;{\bf else}\;0
\]
over the lattice of naturals (with infinity) with $a\widen b = \infty$ whenever $a < b$ and
$a\narrow b = b$ whenever $a=\infty$. The right-hand side of this equation is 
not monotonic. 
An iteration, defined by ${\sf x}_0 = 0$ and ${\sf x}_{i+1} = {\sf x}_i \upd f {\sf x}_i$ for $i\geq 0$ ($f$ the right-hand
side function of the equation) will produce the sequence:
\[
        0 \to \infty \to 0 \to \infty \to 0 \to \infty\to\ldots
\]
and thus will not terminate. We conclude that in absence of monotonicity, we cannot hope for termination---at
least, without further assumptions on the right-hand sides of the equations.
\qed
\end{example}
Still, there is one generic idea to enforce termination for \emph{all} $\upd$-solvers and all monotonic or non-monotonic
systems of equations.
This idea is to equip each unknown with a separate counter that counts how often the solver
has switched from narrowing back to widening. This number then may be taken
into account by the $\upd$-operator, e.g., by choosing successively less aggressive
narrowing operators $\narrow_{\!0},\narrow_{\!1},\ldots $, and, ultimately, to give up improving the
obtained values. 
The latter is achieved by defining 
$a\narrow_{\!k} b = a$ for a certain threshold $k$.

\section{Local generic solvers}\label{s:local}

Similar to generic solvers, we define generic \emph{local} solvers. 
Use of local solvers can be considered if systems of equations are infeasibly large
or even infinite. Such systems are, e.g., encountered for context-sensitive
analysis of procedural languages \citep{CousotCousot77c,Apinis12}.
Local solvers query the system of equations for the value of
a given unknown of interest and try to evaluate only the right-hand sides
of those unknowns that are needed for answering the query
\citep{LeCharlierHent92,VergauwenWaumanLewi94,FS99}. 
For that, it seems convenient that the \emph{dynamic} dependences between unknowns
are approximated.
For a mapping $\rho$, a set ${X'\subseteq{\it X}}$
subsumes all dynamic dependences of a function ${f:({\it X}\to \D)\to \D}$ 
(w.r.t.\ $\rho$) in the case that $f\,\rho = f\,\rho'$ whenever $\rho'|_{X'} = \rho|_{X'}$.
Such sets can be constructed on the fly whenever the
function $f$ is \emph{pure} in the sense of \citep{DBLP:conf/icalp/HofmannKS10}.

Essentially, purity for a right-hand side $f$ means that
evaluating $f\,\rho$ for a mapping $\rho$ operationally consists
of a finite sequence of value lookups in $\rho$ where the next unknown whose value has to be
looked up may only depend on the values that have already been queried.
Once the sequence of lookups has been completed, the final value is determined
depending on the sequence of values and finally returned.

In this case, the set $X'$ can be chosen as the set of all variables $y$  for which the value $\rho\;y$ is queried when evaluating (an implementation of) the function $f$ for the argument $\rho$. Let us denote this set by $\dep_x\;\rho$.

A \emph{partial} $\Box$-solution of an (infinite) system of pure equations $S$ is a
set ${\sf dom}\subseteq{\it X}$ and a
mapping $\rho:{\sf dom}\to \D$  
with the following two properties:
\begin{enumerate}
\item	$\rho[x] = \rho[x]\Box f_x\,\rho$ for all $x\in{\sf dom}$; and
\item	${\sf dep}_x\;\rho \subseteq{\sf dom}$ for all $x\in{\sf dom}$
\end{enumerate}
In essence, this means that a partial $\Box$-solution is a $\Box$-solution of the
subsystem of $S$ restricted to unknowns in ${\sf dom}$.

\noindent
\begin{example}\label{e:loc}
  The following equation system (for ${n\in\mathbb{N}=\D}$)
  \begin{align*}
    y_{2n}   &= \max(y_{y_{2n}},n)\\
    y_{2n+1} &= y_{6n+4}
  \end{align*}
  is infinite as it uses infinitely many unknowns, but has at least one finite 
  partial $\max$-solution---the set $\mathsf{dom} = \{y_1,y_2,y_4\}$ together with the mapping 
  ${\rho=\{y_1\mapsto 2, y_2 \mapsto 2, y_4\mapsto 2\}}$ where 
  ${\sf dep}_{y_1}\;\rho = \{y_4\}$, ${\sf dep}_{y_2}\;\rho = \{y_2\}$ and
   ${\sf dep}_{y_4}\;\rho = \{y_4, y_2\}$.
  \qed
\end{example}


A \emph{local} generic solver instantiated with an operator $\Box$, then, is an algorithm
that, when given a system of pure equations $S$, an initial mapping $\rho_0$ 
for all unknowns, and an unknown $x_0$,
performs a sequence of update operations that, upon termination, results in a
partial $\Box$-solution $(\mathsf{dom},\rho)$, such that $x_0\in\mathsf{dom}$.
In practice, when it is possible, $\rho_0$ is chosen to map each unknown to 
the least element of the directed set.

At first sight, it may seem surprising that such local generic solvers
may exist. In fact, one such instance can be derived from the round-robin algorithm.
For that, the evaluation of right-hand sides is instrumented in such a way that it keeps
track of the set of accessed unknowns. Each round then operates on a growing set
of unknowns. In the first round, just $x_0$ alone is considered. In any subsequent round
all unknowns are added whose values have been newly accessed during the last iteration.

\begin{figure}
\begin{lstlisting}[xleftmargin=1cm,multicols=2]
    let rec solve x = 
      if x NOTIN stable then  
        stable := stable CUP {x};
        tmp := RHO[x] SQCUP f_x (eval x);
        if tmp NOTEQ RHO[x] then 
          W := infl[x];
          RHO[x] := tmp;
          infl[x] := EMPTYSET; 
          stable := stable \ W; 
          foreach x IN W do solve x
        end
      end
  
    and eval x y = 
      solve y;
      infl[y] := infl[y] CUP {x}; 
      RHO[y]

    in
      stable := EMPTYSET; 
      infl := EMPTYSET; 
      RHO := RHO$_0$; 
      solve x$_0$; 
      RHO
\end{lstlisting}  
\caption{The solver \textbf{RLD} from \citep{HKS10}.}\label{f:rld}
\end{figure}

A more elaborate algorithm for local solving is formalized
by \citet{HKS10}, namely the solver \textbf{RLD}, as shown in Figure~\ref{f:rld}. 
This algorithm has the benefit of visiting nodes in a more efficient order, 
first stabilizing innermost loops before iterating on outer loops. 
The global assignment ${\sf infl}: X \to 2^{\it X}$ 
records, for each encountered unknown $y$, the set of unknowns $x\in{\sf dom}$ 
with the following two properties:
\begin{itemize}
\item the last evaluation of $f_x$ has accessed the unknown $y$;
\item	since then, the value of the unknown $y$ has not changed.
\end{itemize}

The right-hand sides $f_x$ are not directly evaluated for the current mapping 
$\rho$,
but instead for a helper function ${\sf eval}$ which in the end, 
returns values for unknowns. Before that, however, the helper function ${\sf 
eval}$ provides extra book-keeping of the encountered dependence between 
unknowns. In order to be able to track dependences between unknowns, the 
helper function ${\sf eval}$ receives as a first argument the unknown $x$ 
whose right-hand side is under evaluation.
The function ${\sf eval}$ first computes the best possible value for 
$y$ by calling the procedure ${\sf solve}$ for $y$.
Then ${\sf eval}$ records the fact that $x$ depends on $y$,
by adding $x$ to the set ${\sf infl}[y]$. Only then is
the corresponding value $\rho[y]$ returned.

The main fixpoint iteration is implemented by the procedure ${\sf solve}$.
It requires a set ${\sf stable}$ of unknowns such that, if $x$ is in  ${\sf 
stable}$, a call to the procedure ${\sf solve}\,x$ has been started and no 
unknowns influencing $x$ have been updated.

This algorithm correctly determines a post-solution of the set of 
equations upon termination. However, when enhanced with an operator $\Box$, it 
is \emph{not} a generic solver in our sense, since it is not guaranteed to 
execute as a sequence of \emph{atomic} updates. 
Due to the recursive call to procedure ${\sf solve}$ at the beginning of ${\sf eval}$,
one evaluation of a right-hand side may occur 
nested into the evaluation of another right-hand side.
Therefore, conceptually, it may happen that an evaluation of a right-hand side  
uses the values of unknowns from several different mappings $\rho_i$ 
from the sequence $\rho_0,\rho_1,\ldots,\rho_n$, instead of the latest mapping $\rho_n$.
Accordingly, the solver \textbf{RLD} is not
guaranteed to return a $\Box$-solution---even if it terminates.
We therefore provide a variant of \textbf{RLD} where 
right-hand sides (conceptually) are executed atomically. 

Clearly, a local generic solver does not terminate if
infinitely many unknowns are encountered.
Therefore, a reasonable local solver will try to consider as few unknowns as possible. 
Our solver, thus, explores the values of unknowns by recursively descending
into solving unknowns \emph{newly} detected while evaluating a right-hand side.
Certain equation systems, though, introduce infinite chains of dependences for the unknowns
of interest.
Those systems then cannot be solved by any local solver.
%
Here, we show that the new solver is guaranteed to terminate for the operator $\upd$
at least for equation systems which are monotonic and either finite or infinite but
where only finitely many unknowns are encountered. 

Let us call the new solver, on Fig.~\ref{f:slr}, {\bf SLR$_1$} 
(\emph{structured local recursive} solver).
The new algorithm maintains an explicit set ${\sf dom}\subseteq{\it X}$
of unknowns that have already been encountered. Beyond \textbf{RLD}, it 
additionally maintains a counter ${\sf count}$ which counts the number of 
elements in ${\sf dom}$, and a mapping ${\sf key}:{\sf dom}\to{\mathbb Z}$ 
that equips each unknown with its priority. Unknowns whose equations may 
possibly be no longer valid will be scheduled for reevaluation. This means 
that they are inserted into a global priority queue $Q$. 


\begin{figure}
\begin{lstlisting}[multicols=2,xleftmargin=2em]
    let rec solve x = 
      if x NOTIN stable then
        stable := stable CUP {x}
        tmp := RHO[x] BOX f_x (eval x);
        if tmp NOTEQ RHO[x] then
          W := infl[x] CUP {x};
          foreach y IN W do add Q y;
          RHO[x] := tmp;
          infl[x] := EMPTYSET;
          stable := stable MINUS $\;$W;
          while (Q NOTEQ EMPTYSET) $\land$ 
                (min_key Q <=key[x]) do
            solve (extract_min Q);
        end
      end
  
    and init y = 
      dom := dom CUP {y};
      key[y] := -count; 
      count++;
      infl[y] := {y};   
      RHO[y] := RHO$_0$[y]
  
    and eval x y = 
      if y NOTIN dom then
        init y;  
        solve y;
      end;
      infl[y] := infl[y] CUP {x};
      RHO[y]                                              

    in 
      stable := EMPTYSET; infl := EMPTYSET; 
      RHO := EMPTYSET; dom := EMPTYSET; 
      Q := empty_queue(); 
      count := 0; init x$_0$; 
      solve x$_0$; 
      RHO
      
      
      
      
      $\phantom{.}$
\end{lstlisting}  
\caption{The new solver \textbf{SLR}$_1$.}\label{f:slr}
\end{figure}

As in the algorithm \textbf{RLD}, right-hand sides $f_x$ are evaluated for a
helper function ${\sf eval}$. The function ${\sf eval}$ first checks whether 
the unknown $y$ is already contained in the domain ${\sf dom}$ of $\rho$.
If this is not the case, $y$ is first initialized by calling the
procedure ${\sf init}$. Subsequently, the best possible value for $y$ is 
computed by calling the procedure ${\sf solve}$ for $y$.

Initialization of a fresh unknown $y$ means that
$y$ is inserted into ${\sf dom}$ where it receives a key
less than the keys of all other unknowns in ${\sf dom}$.
For that, the variable ${\sf count}$ is used.
Moreover, ${\sf infl}[y]$ and $\rho[y]$ are initialized with $\{y\}$ and $\rho_0[y]$, 
respectively. 
%
Thus, the given function ${\sf eval}$ differs from the corresponding
function in \textbf{RLD} in that ${\sf solve}$ is recursively called only for 
\emph{fresh} unknowns, and also in that every unknown $y$ always depends on 
itself.

The main fixpoint iteration is implemented by the procedure ${\sf solve}$.
When ${\sf solve}$ is called for an unknown $x$, we assume that there is 
currently
no unknown $x'\in{\sf dom}$ with ${\sf key}[x']<{\sf key}[x]$ that violates 
its equation, i.e.,
for which ${\rho[x']\neq\rho[x']\;\Box\;f_{x'}\,\rho}$ holds.
In the procedure ${\sf solve}$ for $x$, 
%
the call ${\sf min\_key}\:Q$ returns the minimal key of an element in $Q$,
and ${\sf extract\_min}\;Q$ returns the unknown in $Q$ with minimal key and
additionally removes it from $Q$. Besides the global priority queue $Q$, the 
procedure ${\sf solve}$ also  requires the set ${\sf stable}$ as for 
\textbf{RLD}. Due to the changes in ${\sf eval}$ and the fact that $x$ is 
always added to $W$ during the execution of {\sf solve}~$x$, at each call of 
the procedure {\sf solve}, if $x \in {\sf stable}$ then either
\begin{itemize}
\item a call to the procedure ${\sf solve}\,x$ has been started and the update 
of $\rho[x']$ has not yet occurred; or
\item	the equality $\rho[x]=\rho[x]\;\Box\;f_{x}\,\rho$ holds.
\end{itemize}
The new function ${\sf solve}$ essentially behaves like the corresponding 
function in \textbf{RLD} with the notable exception that
not necessarily all unknowns that have been found unstable after the update of
the value for $x$ in $\rho$, are recursively solved right-away.
Instead, all these unknowns are inserted into the global priority queue $Q$ and then
${\sf solve}$ is only called for those unknowns $x'$ in $Q$ 
whose keys are less or equal than ${\sf key}[x]$. 
Since $x_0$ has received the largest key, 
the initial call ${\sf solve}\,x_0$ will result, upon termination, in an empty priority queue $Q$.

\noindent
\begin{example}
  Consider again the infinite equation system from example \ref{e:loc}.
  The solver $\mathbf{SLR}_1$, when solving for $y_1$, will return 
  the partial $\max$-solution
  ${\{y_0\mapsto 0, y_1\mapsto 2, y_2 \mapsto 2, y_4\mapsto 2\}}$.
  \qed
\end{example}

The modifications of the algorithm \textbf{RLD} to obtain 
algorithm {\bf SLR}$_1$
allow us not only to prove that it 
is a generic local solver, but also a strong result
concerning termination. 
Our main theorem is:

\begin{theorem}\label{t:local}
\begin{enumerate}
\item
When applied to any system of pure equations and interesting unknown $x_0$, the algorithm {\bf SLR}$_1$
returns a partial $\Box$-solution whose domain contains $x_0$---whenever it 
terminates.
\item
Assume that {\bf SLR}$_1$ is
applied to a system of pure equations
over a directed set $\D$ where
each right-hand side is monotonic.
If the operator $\Box$ is instantiated with $\upd$,
then for any initial mapping $\rho_0$ and interesting unknown $x_0$, {\bf SLR}$_1$
is guaranteed to terminate and thus always to return
a partial post solution---whenever only finitely many unknowns are encountered.
\end{enumerate}
\end{theorem}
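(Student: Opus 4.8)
The plan is to prove the two statements by re-using, as much as possible, the structure of the termination proofs for \textbf{SRR} and \textbf{SW} (Theorems~\ref{th:SRR} and~\ref{th:SW}), adapting them to the recursive, priority-queue-driven shape of \textbf{SLR}$_1$. For statement~1, I would maintain the usual generic-solver invariant: at every point where control is \emph{not} inside an active \textsf{solve} call, for each $x\in{\sf dom}$ with $x\in{\sf stable}$ we have $\rho[x]=\rho[x]\mathbin{\Box}f_x\,\rho$, and moreover ${\sf dep}_x\,\rho\subseteq{\sf dom}$ because every lookup performed by a pure right-hand side goes through \textsf{eval}, which calls \textsf{init} on any fresh unknown. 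The subtle point is that \textsf{eval} is executed \emph{atomically} from the point of view of the abstract update sequence: unlike \textbf{RLD}, a fresh unknown $y$ encountered inside $f_x$ is solved \emph{before} its value is ever read back, so that when $f_x$ is (re-)evaluated to completion it sees a single consistent mapping. Thus upon termination $Q=\emptyset$ and ${\sf stable}$ contains every unknown in ${\sf dom}$, which gives a partial $\Box$-solution; since $x_0$ is initialized and given the maximal key, $x_0\in{\sf dom}$.

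For statement~2, I would argue by a nested induction mirroring the \textbf{SW} proof, but ordered by \emph{priority/key} rather than by a fixed index. Assume finitely many unknowns are ever encountered; then from some point on ${\sf dom}$ is fixed, say of size $n$, and the keys linearly order these $n$ unknowns. Call the unknown with the smallest key (hence highest priority, evaluated "innermost") $x_{(1)}$ and the one with the largest key $x_{(n)}$ (this is $x_0$). The induction is on the number of distinct keys: the base case is a single unknown, handled by the defining properties of $\widen$ and $\narrow$. For the step, suppose \textsf{solve} terminates for every configuration involving only the $n-1$ highest-priority unknowns; consider a hypothetical nonterminating run. As in \textbf{SW}, if the lowest-priority unknown $x_{(n)}$ of the current scope is updated only finitely often, then after its last update the remaining activity is confined to a strictly smaller subsystem (with $x_{(n)}$'s value frozen), which terminates by the induction hypothesis — contradiction. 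So $x_{(n)}$ must be updated infinitely often. Now I use the key mechanism: whenever \textsf{solve}~$x$ runs to completion, the inner \texttt{while} loop drains $Q$ of everything with key $\le{\sf key}[x]$, so between two consecutive updates of $x_{(n)}$ all lower-priority (smaller-key) unknowns have been brought back into ${\sf stable}$, i.e.\ $\rho[x_{(r)}]\sqsupseteq f_{x_{(r)}}\,\rho$ for all $r<n$ at those moments. If at such a moment $f_{x_{(n)}}\,\rho\sqsubseteq\rho[x_{(n)}]$, then $\upd$ acts as $\narrow$, the update is non-increasing, and by monotonicity every subsequent right-hand side value only decreases, so from there on the whole iteration is a pure narrowing iteration and terminates — contradiction. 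Hence at every such moment $f_{x_{(n)}}\,\rho\not\sqsubseteq\rho[x_{(n)}]$, so $\upd$ acts as $\widen$ and we obtain a strictly ascending widening sequence for $x_{(n)}$, which is impossible. Therefore $x_{(n)}$ is updated only finitely often, and combined with the previous case, \textsf{solve}~$x_0$ terminates; by statement~1 it returns a partial post solution.

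The main obstacle I expect is not the widening/narrowing dichotomy — that part is essentially the \textbf{SW} argument — but rather pinning down precisely the state in which $f_{x_{(n)}}$ is evaluated, because of the recursive interleaving through \textsf{eval}. I would need to establish carefully that each \emph{full} evaluation of a right-hand side inside \textsf{solve} indeed observes a single mapping (atomicity), that freshly-encountered unknowns cannot keep appearing forever under the "finitely many unknowns" hypothesis, and that the invariant "$x\in{\sf stable}$ either means an update to $x$ is pending up the call stack, or $x$'s equation holds" — stated just before the theorem — is genuinely preserved by the interplay of the \texttt{stable := stable MINUS W} reset, the re-insertion of $W$ into $Q$, and the recursive \textsf{solve} calls. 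Once atomicity and this \textsf{stable}/$Q$ invariant are nailed down, both statements follow by the inductive scheme above; the bound in the finite-height case (analogous to Theorem~\ref{th:SW}(1)) is not claimed here, so no complexity accounting is needed.
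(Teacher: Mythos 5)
Your proposal is correct and follows essentially the same strategy the paper uses: establish that right-hand sides may be treated as atomically evaluated (because recursive \textsf{solve} calls from \textsf{eval} only touch freshly initialized unknowns, whose keys are strictly smaller), verify the \textsf{stable}/$Q$/\textsf{infl} invariants for statement~1, and then for statement~2 exploit the priority queue to argue that between two successive evaluations of an unknown $x$ all smaller-key unknowns are stable, so the sequence of $\upd$-updates to $\rho[x]$ is a widening phase followed (by monotonicity) by a pure narrowing phase, hence finite.

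The one visible difference is in how you set up the induction for termination. The paper inducts directly on the key values of unknowns in the (final) domain \textsf{dom}: it shows for each $x$, in order of increasing key, that a call $\textsf{solve}\;x$ terminates given that $\textsf{solve}\;y$ terminates for every $y$ of smaller key; the base case is the minimal-key (last-discovered, innermost) unknown. You instead lift the scheme of Theorem~\ref{th:SW} wholesale: induct on the number of unknowns and argue that the largest-key unknown $x_{(n)}=x_0$ cannot be extracted infinitely often, splitting into the "finitely many updates to $x_{(n)}$" and "infinitely many updates" cases. Both inductions rest on the same key lemma, but the paper's formulation is somewhat better adapted to the recursive shape of \textbf{SLR}$_1$: it avoids having to make precise what it means to "run \textbf{SLR}$_1$ on the sub-configuration with $x_{(n)}$ frozen," which in a worklist solver is just the residual system but in \textbf{SLR}$_1$ is the continuation of the top-level \texttt{while} loop inside $\textsf{solve}\;x_0$, a slightly awkward object to name. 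Your proposal does flag this as the main thing to pin down, and indeed that is exactly where the paper's key-value induction is tidier. (Also a small slip: where you write "all lower-priority (smaller-key) unknowns" you mean \emph{higher}-priority, i.e.\ smaller-key; the parenthetical makes the intent clear.)

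Both routes work; the paper's avoids reasoning about sub-configurations by inducting on a termination claim for individual $\textsf{solve}\;x$ calls, while yours reuses the already-proved \textbf{SW} template at the cost of a bit more bookkeeping about what the "residual" run is.
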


\begin{proof}
We first convince ourselves that, upon termination, each right-hand side can be
considered as being evaluated atomically. For that, we notice that
a call ${\sf solve}\,y$ will never modify the value $\rho[x]$ of an unknown
$x$ with ${\sf key}[x]>{\sf key}[y]$. 
During evaluation of right-hand sides, a recursive call to ${\sf solve}$ may 
only occur for an unknown $y$
that has not been considered before, i.e., is fresh.
Therefore, it will not affect any unknown that has been encountered 
earlier.
From that, we conclude that reevaluating a right-hand side $f_x$ for $\rho$
immediately after a call $f_x\,({\sf eval}\,x)$, will return the same value ---
but by a computation that does not change $\rho$ and thus is atomic.

In order to prove that {\bf SLR}$_1$ is a local generic solver,
it therefore remains to verify that upon termination, $\rho$ is a partial
$\Box$-solution with $x_0\in{\sf dom}$.
Since $x_0$ is initialized before ${\sf solve}\,x_0$ is called,
$x_0$ must be contained in ${\sf dom}$. 
Upon termination, evaluation of no unknown is still
in process and the priority queue is empty. 
All unknowns in ${\sf dom}\setminus{\sf stable}$  
are either fresh and therefore solved right-away,
or non-fresh and then inserted into the priority queue. 
Therefore, we conclude that the equation ${\rho[x]=\rho[x]\;\Box\;f_x\,\rho}$
holds for all $x\in{\sf dom}$.
Furthermore, the invariant for the map ${\sf infl}$ implies that upon termination, 
$x\in{\sf infl}[y]$ whenever $x=y$ or $y\in{\sf dep}_x\,\rho$. In particular,
${\sf infl}$ is defined for $y$ implying that $y\in{\sf dom}$.

In summary, correctness of the algorithm {\bf SLR}$_1$ follows from the stated invariants.
The invariants themselves follow by induction on the number of function calls.
Therefore, statement 1 holds.

For a proof of statement 2, assume that all equations are monotonic and
only finitely many unknowns are
encountered during the call ${\sf solve}\,x_0$.
Let ${\sf dom}$ denote this set of unknowns.
We proceed by induction on key values of unknowns in ${\sf dom}$.
First consider the unknown $x\in{\sf dom}$ with minimal key value. Then
for all mappings $\rho$ and ${\sf infl}$, the call ${\sf solve}\,x$ 
will perform a sequence of updates to $\rho[x]$. In an initial segment of this
sequence, the operator $\upd$ behaves like $\widen$. As soon as the same value
$\rho[x]$ or a smaller value is obtained, the operator $\upd$ behaves like 
the operator $\narrow$. Due to monotonicity, the remaining sequence may only 
consist of narrowing steps. By the properties of widening and narrowing
operators, the sequence therefore must be finite.

Now consider a call ${\sf solve}\,x$ for an unknown $x\in{\sf dom}$ where
by inductive hypothesis, ${\sf solve}\,y$ terminates for all unknowns 
$y$ with smaller keys, and all mappings $\rho$, ${\sf infl}$, sets ${\sf stable}$
and priority queue $Q$ satisfy the invariants of the algorithm.
In particular, this means that every recursive call to a fresh unknown
terminates.

Assume for a contradiction that the assertion were wrong and the call
to ${\sf solve}\,x$ would not terminate. Then this means that 
the unknown $x$ must be destabilized after every evaluation of 
$f_x\,({\sf eval}\,x)$. Upon every successive call to ${\sf solve}\;x$,
all unknowns with keys smaller than ${\sf key}[x]$ are no longer 
contained in $Q$ and therefore are stable.
Again we may deduce that the successive updates for $\rho[x]$ 
are computed by $\widen$ applied to the former value of $\rho[x]$ and
a new value provided by the right-hand side for $x$, until a narrowing
phase starts. Then, however, again due to monotonicity a decreasing
sequence of values for $\rho[x]$ is encountered where each new value
now is combined with the former value by means of $\narrow$.
Due to the properties of $\widen$ and $\narrow$, we conclude that the 
iteration must terminate.
\qed
\end{proof}


\graphicspath{./figs}
\section{Localized $\upd$ in \SLR}
\label{s:localized}
So far we have applied the operator $\upd$ at every right-hand side.
It has been long known for the 2-phase widening and narrowing approach, however, that precision can be gained by applying widening and thus also narrowing only at selected unknowns. These unknowns may be chosen freely, provided they form an \emph{admissible set}, i.e. at least one unknown is selected for each loop in the dependence graph of the equations. 
%
%
%
%
%
%
%
%
When intertwining widening and narrowing by means of structured round-robin or worklist iteration,
restricting $\upd$ to an admissible set of widening points may, however, no longer ensure termination of the 
resulting solvers.

\begin{example}\label{e:wideningpoints}
Consider the same set of equations in the Example~\ref{e:term}. According to our definition, the singleton set
$\{x_2\}$ is admissible. Now assume that the $\upd$ operation is performed for the unknown $x_2$ only. 
With \SRR we obtain the following sequence of updates:
\[
\begin{array}{c|c|c|c|c|c|c|c|c|c|c|c}
i            &\;\;\;\;& 2      & 1      & 2      & \,1\,\, & 3      & 2      & 1   & 2  & 1 &   \\ \hline
x_1          & 0& 0      & \infty & \infty & 1 & 1      & 1      & \infty & \infty & 2 & \dots \\
x_2          & 0& \infty & \infty & 1      & 1 & 1      & \infty & \infty & 2  & 2 & \dots \\ 
x_3          & 0& 0      &      0 & 0      & 0 & 1  & 1 & 1  & 1 & 1 & \dots
\end{array}
\]
Whenever the value for $x_3$ increases, $x_2$ and $x_1$ receive the value $x_3+1$, implying that subsequently, 
$x_3$ further increased. A stable post-solution is never attained. 
A similar behavior can also be observed for \SW on this example.\hfill \qed
\end{example}

\noindent
Example \ref{e:wideningpoints} indicates that we cannot ignore the ordering on the unknowns $x_i$ when selecting the 
points of application for $\upd$. 
Therefore, we refine the notion of admissibility as follows.
%
Assume that we are given a system of equations $x_i = e_i, i=1,\ldots,n$ where sets $\dep(x_i)$ of
variable dependences are explicitly given. 
Then the set $W$ of unknowns is called an \emph{admissible set of $\upd$-points} if, in each cycle in the 
dependence graph of the equations, the unknown with the \emph{highest index} is in $W$.
%
We obtain:

\begin{theorem}
Given a system of equations and an admissible set $W$ of $\upd$-points, both the algorithm $\SRR$ and the algorithm 
$\SW$ is guaranteed to terminate when instantiated with $\Box = \upd$, even when restricting the application of 
$\upd$ to unknowns in $W$ only.
\end{theorem}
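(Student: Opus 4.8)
\noindent
The plan is to adapt the termination arguments of Theorems~\ref{th:SRR}(2) and~\ref{th:SW}(2) to the localized setting, the only genuinely new ingredient being the refined admissibility condition. The pivotal observation is that if $x_i\notin W$, then $x_i$ lies on no cycle of the dependence graph using only unknowns from $\{x_1,\dots,x_i\}$: on any such cycle $x_i$ would be the node of maximal index and hence would be forced into $W$. Writing $U_i$ for the set of unknowns on which $x_i$ transitively depends by edges staying inside $\{x_1,\dots,x_i\}$, this says that $U_i$ is downward closed, contains $\dep(x_i)$, and does not contain $x_i$ --- a ``no self-feedback'' property which, at the non-$\upd$-points, takes over the role played by the widening operator at the $\upd$-points. (Recall also that at an unknown $x\notin W$ the update is the plain re-assignment $\rho[x]:=f_x\,\rho$; together with $\rho[x]=\rho[x]\upd f_x\,\rho$ forcing $\rho[x]\sqsupseteq f_x\,\rho$ at the unknowns of $W$, a terminating run of either solver still returns a post solution, so only termination is at stake.)

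For \SRR I would reprove by induction on $i$, as in Theorem~\ref{th:SRR}, that ${\sf solve}\,i$ terminates on every mapping, using that ${\sf solve}\,(i-1)$ terminates and that upon its return $\rho[x_j]\sqsupseteq f_j\,\rho$ for all $j<i$ --- an invariant unaffected by localization, since at the non-$\upd$-points it holds with equality. If $x_i\in W$, the original argument carries over verbatim: the successive values taken by $\rho[x_i]$ inside ${\sf solve}\,i$ first form a widening chain and, once a non-larger value appears, a narrowing chain (this switch being irreversible by monotonicity together with the above invariant), so $x_i$ is updated only finitely often. If $x_i\notin W$, I would instead show that $x_i$ is updated at most once within ${\sf solve}\,i$: after such an update the only destabilized unknowns are those that transitively depend on $x_i$, none of which lies in $U_i$ and hence none in $\dep(x_i)$, so the next call ${\sf solve}\,(i-1)$ leaves every input of $f_i$ untouched and the recomputed value of $f_i\,\rho$ equals the one just installed. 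In both cases ${\sf solve}\,i$ issues finitely many recursive sub-calls, each terminating by hypothesis.

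For \SW I would proceed by induction on the number $n$ of unknowns, as in Theorem~\ref{th:SW}, singling out the unknown $x_n$ of maximal index; note that for any downward-closed set $Y$ of unknowns the restricted system again has $W\cap Y$ admissible, so the hypothesis applies to all the sub-systems used below. Since $x_n$ carries the largest key, it is extracted from $Q$ only when $Q=\{x_n\}$, i.e.\ when every other unknown satisfies its equation. If $x_n\in W$, the original dichotomy is unchanged: were $x_n$ extracted only finitely often, an infinite run would take place on the sub-system over $X\setminus\{x_n\}$, contradicting the hypothesis; were it extracted infinitely often, the widening property would be violated. If $x_n\notin W$, the down-set $U_n$ is a self-contained sub-system on fewer than $n$ unknowns, hence stabilizes after finitely many steps by the hypothesis; therefore the inputs of $f_n$, all of which lie in $U_n$, change only finitely often, so $\rho[x_n]$ changes only finitely often, so $x_n$ --- re-enqueued only when it or one of its inputs changes --- is extracted only finitely often, and we conclude once more by the sub-system argument.

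The step I expect to cost the most care is this ``no self-feedback'' bookkeeping: in each solver one has to follow, through the solver's concrete scheduling of re-evaluations, precisely which unknowns become unstable after updating a non-$\upd$-point $x_i$, and check that all of them sit strictly downstream of $x_i$, so that re-solving the lower part of the system cannot perturb an input of $f_i$. This is exactly where the refined admissibility (maximal-index node of every cycle in $W$), rather than plain admissibility, is indispensable --- Example~\ref{e:wideningpoints} shows that the claim already fails for a merely admissible $W$.
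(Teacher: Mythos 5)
Your proof is correct and rests on the same key observation as the paper's (which is spelled out only for \SW): $x_i\notin W$ excludes $x_i$ from every cycle inside $\{x_1,\dots,x_i\}$, so an update of $x_i$ propagates strictly downstream and never feeds back into $\dep(x_i)$. The framing for \SW differs slightly: you argue that the upstream set $U_n$ is a closed proper subsystem and invoke the inductive hypothesis on it to deduce that the inputs of $f_n$ eventually freeze; the paper instead partitions $\{x_1,\dots,x_{n-1}\}$ into downstream ($X_1$), upstream ($X_2$) and unrelated ($X_3$) unknowns, observes that $X_2$ and $X_3$ are already stable when $x_n$ is first extracted (since $x_n$ carries the largest key), and notes that the ensuing $X_1$-cascade can never re-enqueue $x_n$. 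These are two views of the same fact; the paper's is marginally more direct, because invoking the hypothesis on $U_n$ tacitly requires checking that the projection of the full \SW run onto $U_n$-extractions is itself a legal \SW run on that closed subsystem --- true, since $U_n$ is closed under $\dep$ and the relevant part of $\mathsf{infl}$ and inherits the key ordering, but it deserves a sentence. Your explicit treatment of \SRR fills in what the paper leaves implicit (``the proofs are similar''), and the argument there is sound.

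One small slip worth repairing: you assert that $U_i$ ``contains $\dep(x_i)$,'' but as $U_i$ is built only from edges staying inside $\{x_1,\dots,x_i\}$, it contains only $\dep(x_i)\cap\{x_1,\dots,x_i\}$. This does not endanger the conclusion --- inputs of $f_i$ with index above $i$ are never touched by $\mathsf{solve}\,(i-1)$, so all inputs of $f_i$ are indeed unchanged after the inner call --- and the distinction vanishes in the \SW case where $i=n$, but the statement as written is not literally true for \SRR.
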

\begin{proof}
The proofs are similar to those for the Theorems~\ref{th:SRR} and \ref{th:SW}, respectively.
Here, we only consider the assertion for $\SW$.
For the base case, note that if $x_1$ is the only unknown, 
either the right hand size of $x_1$ is a constant, or it refers to $x_1$ itself, in which case $x_1$ is in the set of 
$\upd$-points. In both cases, \SW terminates. For the inductive case, assume $x_n$ is extracted infinitely many times. 
First assume that $x_n$ is contained in $W$. 
In this case, the proof proceeds as in Theorem~\ref{th:SW}. 
Now assume that $x_n$ is not contained in $W$. Then there is no loop containing $x_n$ which
consists of variables with index at most $n$.
In particular, this means that the set $\{x_1,\ldots,x_{n-1}\}$ can be split into disjoint subsets $X_1,X_2,X_3$.
$X_1$ consists of all unknowns directly or indirectly depending on $x_n$, 
$X_2$ consists of the unknowns onto which $x_n$ directly or indirectly depends, and
$X_3$ contains the remaining unknowns.
As soon as $x_n$ is evaluated for the first time, the evaluation of the unknowns in $X_2$ and $X_3$ have already 
terminated. Therefore following an update of the unknown $x_n$, only unknowns from $X_1$ may be added to the worklist.
Since none of these ever will cause $x_n$ to be added to the worklist again, fixpoint iteration terminates
by the inductive hypothesis.
\hfill\qed
\end{proof}

\begin{example}\label{e:boxpoints}
According to the refined definition, the set $\{x_2\}$ in Example~\ref{e:wideningpoints} is 
no longer admissible, whereas the set $\{x_3\}$ is. When restricting $\upd$ to the latter set, we obtain:
%
\[
\begin{array}{c|c|c|c|c|c|c|c|}
i            &\;\;\;\;   & 2 & 1 & 3      & 2      & 1      & 3        \\ \hline
x_1          & 0& 0 & 1 & 1      & 1      & \infty & \infty \\
x_2          & 0& 1 & 1 & 1      & \infty & \infty & \infty \\ 
x_3          & 0& 0 & 0 & \infty & \infty & \infty & \infty
\end{array}
\]
and the algorithm terminates.\hfill \qed
\end{example}

\noindent
In applications where dependences between unknowns may change, we cannot 
perform any pre-computation on the dependence graph between unknowns. In order to
convieniently deal with these nonetheless,
methods are required which determine admissible sets of $\upd$-points \emph{on the fly}. 
Assume that we are given an assignment $\key$ of unknowns to priorities which
are \emph{linearly} ordered. Such an assignment enables us to 
dynamically identify \emph{back-edges}. Here, a back-edge $x\to y$ consists of 
unknowns $x,y$ where the value of $x$ is queried in the right-hand side of $y$ 
where $\key[x]\geq\key[y]$. Note that this does not correspond to the standard 
definition of back-edge, but we use the same terminology since both may be 
used to identify the head of loops.
%
When a back-edge $x \to y$ is encountered, then $x$ is the unknown with the highest priority 
in some loop and therefore should be included into the set of admissible unknowns, i.e., those
where $\upd$ is going to be applied. 
In all the other case, we may omit the application 
of $\update$.
%
%
The resulting improvement to the solver, as shown in Fig.~\ref{prg:slr+wp}, is called $\SLR_2$.

\begin{figure}
\centering
\begin{tabular}{c}
\begin{lstlisting}[columns=fullflexible,mathescape]
    let rec solve x = 
      `wpx := if x IN wpoint then true else false;`
      if x NOTIN stable then
         stable := stable CUP {x};
         tmp := `if wpx 
           then `RHO[x] UPD f_x (eval x)`
           else f_x (eval x)`
         if tmp NOTEQ RHO[x] then
           RHO[x] := tmp;
           W := `if wpx then infl[x] CUP {x} else` infl[x];
           foreach y IN W do add Q y;
           infl[x] := EMPTYSET;
           stable  := stable MINUS $\;$W;
           while (Q NOTEQ EMPTYSET$\!$) LAND (min_key Q <= key[x]) do
             solve (extract_min Q);
         end
      end 

    and init y = 
      @\emph{as in the original $\SLR_1$}@

    and eval x y = 
      if y NOTIN dom then 
        init y; solve y;
      `if key[x] <= key[y] then wpoint := wpoint CUP {y}`;
      infl[y] := infl[y] CUP {x};
      RHO[y]

    in 
      `wpoint := EMPTYSET`
      @\emph{as in the original $\SLR_1$}@
\end{lstlisting}
\end{tabular}
\caption{\label{prg:slr+wp}The algorithm $\SLR_2$, which is \SLR with plain localized widening. Colored in red are then changes w.r.t. $\SLR_1$.}
\end{figure}

Interestingly for our suite of benchmark programs, the algorithm $\SLR_2$ did not significantly improve the precision of the resulting interval analysis. 
\begin{figure}
\begin{minipage}[b]{0.4\linewidth}
  \centering
  \begin{lstlisting}[xleftmargin=10mm,belowskip=5mm]
    i = 0;
    while (i < 100) {
      j = 0;
      while (j < 10) {
        // Inv: 0 <= i <= 99
        j = j + 1;
      }
      i = i + j;
    }
  \end{lstlisting}
  \caption{Example program with nested loops.}
  \label{f:nested}
\end{minipage}
\hfill
\begin{minipage}[b]{0.55\linewidth}
\begin{center}
\scalebox{0.6}{\input{figs/nested.pspdftex}}
\end{center}
\vspace{-5mm}
\caption{\label{f:nested_cfg}The control-flow graph of the program.}
\end{minipage}
\end{figure}
Consider, e.g., 
the program in Fig.~\ref{f:nested}. The control-flow graph corresponding to this program is shown in Fig.~\ref{f:nested_cfg} where each node $v$ is marked with the priority assigned to $v$ when the function ${\sf solve}$ of $\SLR_1$ is called for the endpoint of the program for an interval analysis. 
%
%
We are looking for nodes that influence nodes with smaller priority. In the example, these are the nodes 
with priorities $-1$ and $-5$, respectively, i.e., exactly the loop heads.
After the first iteration for interval analysis on this program, the interval $[0,0]$ has been established for the program 
variable $i$ at all program points of the inner loop.
Then a second iteration of the outer loop is performed. Even if the operator $\upd$ is only applied at the loop heads,
we obtain the interval $[0,\infty]$ for $i$ at the loop head of the outer 
loop. In the subsequent iteration of the
inner loop, the new interval for variable $i$ at the inner loop head is 
$[0,99]$. Since the operator $\upd$ is meant to be
applied at that program point, the interval $[0,0]\upd[0,99] = [0,\infty]$ is 
recorded for $i$ and subsequently also propagated
to all other program points of the inner loop,
and no subsequent narrowing will take place to recover from the loss of the upper bound for $i$.

This kind of loss of precision is avoided if we allow the set $\wpoint$ of unknowns where to apply $\upd$ 
not only to grow monotonically, but also to \emph{shrink}.
Our second idea therefore is to \emph{remove} an unknown $x$ from $\wpoint$ before the right-hand side of $x$ is evaluated.  
%
%
The resulting algorithm $\SLR_3$ is shown in Fig.~\ref{prg:slr+locw}.
Note that back-edges are detected by the call $\eval\;x\;y$ which therefore may insert $y$ into the set \wpoint, 
while the unknown $x$ is removed from \wpoint inside the call \solve $x$.

%
%
%

\begin{theorem}
When applied to any system of pure equations over a 
directed set $\D$ and interesting unknown $x_0$, the algorithm 
$\SLR_3$ returns a post solution, whenever it terminates. If each right hand 
side is monotonic, then $\SLR_3$ is guaranteed to terminate, whenever only 
finitely many unknowns are encountered.
\end{theorem}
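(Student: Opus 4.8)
The plan is to mirror the two-part structure of Theorem~\ref{t:local} and its proof, pointing out where the argument carries over verbatim and where the shrinking of $\wpoint$ forces extra care. For the \emph{soundness} part, I would first argue, exactly as in Theorem~\ref{t:local}, that upon termination each right-hand side can be regarded as evaluated atomically: a recursive call to $\solve$ inside an $\eval$ only happens for a fresh unknown, hence cannot disturb the value of any already-encountered unknown with larger key. The new wrinkle is that at the point of update we may have computed $\tmp$ via plain $f_x(\eval\;x)$ rather than $\rho[x]\upd f_x(\eval\;x)$, precisely because $x\notin\wpoint$ at that moment. I would show this is still fine: if upon termination $x\notin\wpoint$, then the stored value literally satisfies $\rho[x]=f_x\,\rho$, which trivially implies $\rho[x]=\rho[x]\upd f_x\,\rho$ (since $b\sqsubseteq b$ gives $a\upd b$ with $b\sqsubseteq\dots$, and in any case $\rho[x]=f_x\,\rho\sqsupseteq f_x\,\rho$, a post condition); and if $x\in\wpoint$ upon termination, the last update used $\upd$, so the $\upd$-equation holds directly. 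Either way $\rho[x]\sqsupseteq f_x\,\rho$ for all $x\in\dom$, and combined with the (unchanged) $\infl$/$\dep$ invariant giving $\dep_x\,\rho\subseteq\dom$, we get a partial post solution containing $x_0$.

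For the \emph{termination} part, I would again induct on the key value of unknowns in the (finite, by assumption) set $\dom$ of encountered unknowns, as in Theorem~\ref{t:local}. Fix $x\in\dom$ and assume $\solve\;y$ terminates for every $y$ with smaller key under any admissible configuration. Consider a non-terminating run of $\solve\;x$: then $x$ is destabilized after every evaluation of its right-hand side, and between successive re-evaluations all unknowns with smaller key are stable (they have been drained from $Q$). The crux is the behaviour of $\wpoint$ with respect to $x$. By the refined admissibility argument underlying the previous theorem, any back-edge $x'\to x$ with $\key[x']\ge\key[x]$ comes from an unknown $x'$ with key $\ge\key[x]$ that lies on a cycle through $x$; since such $x'$ are re-evaluated (they are not among the stabilized smaller-key unknowns), $x$ gets re-inserted into $\wpoint$ by the corresponding $\eval\;x'\;x$ call. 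Thus along the non-terminating run, every iteration on $x$ after which $x$ is fed back by some higher-or-equal-key unknown finds $x\in\wpoint$ and therefore uses $\upd$. Hence we obtain an infinite sequence of updates to $\rho[x]$ each of which is a genuine $\upd$-step: an initial widening segment followed — once a value $\sqsubseteq$ the current one appears, which by monotonicity it eventually does if widening ever stabilizes, or else widening forces a strictly ascending chain — by a pure narrowing tail. By the defining properties of $\widen$ and $\narrow$ neither segment can be infinite, a contradiction. Therefore $\solve\;x$ terminates, and by induction so does $\solve\;x_0$, which by part~1 then returns a partial post solution.

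I would finish by noting that the only genuinely new step relative to Theorem~\ref{t:local} is the bookkeeping showing that an unknown $x$ which is destabilized infinitely often is also re-inserted into $\wpoint$ infinitely often — i.e., that the \emph{removal} of $x$ from $\wpoint$ at the start of $\solve\;x$ is always undone before the next destabilizing feedback arrives. This is where I expect the main difficulty: one must argue carefully that the feedback edge responsible for destabilizing $x$ is itself a back-edge in the $\key$-sense (so that the $\eval$ call restores $x$ to $\wpoint$), using that all strictly-smaller-key unknowns are stable at that moment and hence the destabilizer has key $\ge\key[x]$. Once that invariant is in place, the widening/narrowing dichotomy is exactly as before and the rest is routine.
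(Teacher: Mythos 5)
Your soundness argument matches the paper's: atomicity and the dependency invariant carry over from Theorem~\ref{t:local}, and upon termination each $x\in\dom$ satisfies either $\rho[x]=\rho[x]\upd f_x\,\rho$ or $\rho[x]=f_x\,\rho$, so in either case $\rho[x]\sqsupseteq f_x\,\rho$. That part is fine.

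The termination argument, however, has a genuine directional error at exactly the spot you flag as the crux. You claim that the unknown restoring $x$ to $\wpoint$ has key $\geq\key[x]$, justified by ``all strictly-smaller-key unknowns are stable at that moment.'' Both halves are wrong. In $\SLR_3$ the call $\eval\;a\;b$ inserts $b$ into $\wpoint$ only when $\key[a]\leq\key[b]$, so for $\eval\;x'\;x$ to re-insert $x$ we need $\key[x']\leq\key[x]$, not $\geq$. And the smaller-key unknowns are \emph{not} stable between successive re-evaluations of $x$: the inner loop of $\solve\;x$ only processes unknowns with key $\leq\key[x]$, so it is precisely those lower-key unknowns that get destabilized by the update of $\rho[x]$, re-solved, and thereby both re-enqueue $x$ and (via $\eval\;y\;x$ with $\key[y]<\key[x]$) restore $x$ to $\wpoint$. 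Unknowns with key $>\key[x]$ are, on the contrary, never touched inside $\solve\;x$ at all, so a hypothetical back-edge from them would never be traversed during this loop. You are also conflating two distinct edges: the edge on which $f_x$ depends (some $z$ with $\key[z]<\key[x]$, which destabilizes $x$ when $\rho[z]$ changes, and does \emph{not} add $x$ to $\wpoint$ because $\eval\;x\;z$ checks $\key[x]\leq\key[z]$) versus the edge from some lower-key $y\in\infl[x]$ back to $x$, whose re-evaluation is what actually puts $x$ into $\wpoint$. The paper's argument hinges on the latter: a $y\in\infl[x]$ with $\key[y]\leq\key[x]$ is destabilized by the update of $\rho[x]$, re-solved before $x$ is popped again, and its $\eval\;y\;x$ restores $x$ to $\wpoint$, so $\textsf{wpx}$ is true on every subsequent $\solve\;x$ and $\upd$ is applied, contradicting non-termination. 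Your induction-on-keys framing versus the paper's max-priority extraction is a harmless repackaging; the inequality flip is not, and as written your argument would conclude that $x$ stays in $\wpoint$ for the wrong reason and by a mechanism the algorithm does not provide.
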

\begin{proof}
The considerations in the original proof for $\SLR_1$ regarding atomicity of 
evaluation of right-hand sides still hold. The same is true for partial 
correctness. The only difference w.r.t.\ $\SLR_1$ is that, upon termination, 
for an unknown $x$ either $\rho[x]= \rho[x] \upd f_x \rho$ or $\rho[x] = f_x 
\rho$. In any case, $\rho$ is a post-solution.

The most interesting part is the proof of termination. So, assume that all right 
hand sides are monotonic and only finitely many unknowns are encountered 
during the call of ${\solve\;x_0}$. Assume the algorithm does not terminate. 
It means there are unknowns $x$ whose values $\rho[x]$ are updated infinitely 
many times. Let $x$ denote one of these unknowns, namely the one with maximum 
priority. From a certain point in the execution of the algorithm, no fresh 
unknown is encountered and  no $\rho[y]$ for an unknown $y$ with key value exceeding 
$\key[x]$ is ever updated.

Assume we have reached this point in the execution of the algorithm. Moreover, 
assume that $x$ is extracted. This means that in the queue there are no
unknowns with key value less than $\key[x]$. Since all unknowns with key values
greater than $\key[x]$ are not subject to update (hence their evaluation does not 
add elements to the queue), for $x$ to be extracted repeatedly, the only 
possibility is that:
\begin{enumerate}
\item in \solve $x$, we should have $tmp \neq \rho[x]$;
\item there is an unknown $y \in \infl[x]$ with $\key[y] \leq \key[x]$, and $y$ is put into the queue.
\end{enumerate}
When $y$ is evaluated (it will happen before \solve $x$ is called again), $x$
will be added to \textsf{wpoint}, hence \textsf{wpx} will always be true when
evaluating \solve $x$. However, by the properties of $\upd$, this means that $x$ cannot 
be updated infinitely many times: contradiction. Therefore the algorithm terminates. \qed
\end{proof}

\begin{figure}
\centering
\begin{tabular}{c}
\begin{lstlisting}[columns=fullflexible,mathescape]
    let rec solve x = 
      wpx := if x IN wpoint then true else false;
      `wpoint := wpoint MINUS $\;${x};`
      if x NOTIN stable then
         stable := stable CUP {x};
         tmp := if wpx 
           then RHO[x] UPD f_x (eval x)
           else f_x (eval x)
         if tmp NOTEQ RHO[x] then
           RHO[x] := tmp;
           W := if wpx then infl[x] CUP {x} else infl[x];
           foreach y IN W do add Q y;
           infl[x] := EMPTYSET;
           stable  := stable MINUS $\;$W;
           while (Q NOTEQ EMPTYSET$\!$) LAND (min_key Q <= key[x]) do
             solve (extract_min Q);
         end
      end 

    and init y = 
      @\emph{as in the $\SLR_1$ and $\SLR_2$}@

    and eval x y = 
      @\emph{as in the $\SLR_2$}@

    in 
      @\emph{as in the original $\SLR_2$}@
\end{lstlisting}
\end{tabular}
\caption{\label{prg:slr+locw}The algorithm $\SLR_3$, which is \SLR with simple localized widening. Colored in red are then changes w.r.t. $\SLR_2$.}
\end{figure}

Let us again consider the program from Fig.~\ref{f:nested}.
The solver $\SLR_3$ iterates through the program points of the inner loop until stabilization before the next iteration on the program points of the outer loop is performed. 
After this iteration, the interval $[0,0]$ has been established for the program variable at all program points of the 
inner loop. 
Since the unknown corresponding to the loop head of the inner loop is now stable, it is no longer contained
in the set {\sf wpoint}. 
Therefore, when during the next iteration of the outer loop the interval 
$[0,99]$ arrives for program variable $i$,
this interval will replace the current interval $[0,0]$ for $i$ (without application of the operator $\upd$).
Accordingly, the subsequent iteration on the inner loop will propagate this interval throughout the inner loop
without change. Therefore no upper bound $\infty$ for $i$ is ever generated within the inner loop. 
This effect is comparable to the concept of localized widening as proposed by \citet{amato2013localizing}.

\section{Restarting in \SLR}
\label{s:restart}
Besides localization of widening and narrowing, \citet{amato2013localizing} present a second idea to improve precision
of fixpoint iteration in presence of infinite increasing chains. Consider the program in Fig.\ \ref{f:hybrid} whose
control-flow graph is given in Fig.~\ref{f:bench}.
In this example, the program variable $i$ takes values from the interval $[0,10]$ whenever the inner loop is entered.
\begin{figure}
  \centering
  \begin{minipage}[b]{0.4\textwidth}
  \begin{lstlisting}[xleftmargin=10mm,belowskip=10mm]
    i = 0;
    while (TRUE) {
      i = i + 1;
      j = 0;
      while (j < 10) {
        // Inv: 1 <= i <= 10
        j = j + 1;
      }
      if (i > 9) i = 0;
    } 
  \end{lstlisting}
  \caption{Example program \textsf{hybrid} from \citep{Halbwachs12}.}
  \label{f:hybrid}
  \end{minipage}
  \hfill
  \begin{minipage}[b]{0.55\textwidth}
    \centering
  \scalebox{0.6}{\input{figs/bench.pspdftex}}
  \caption{\label{f:bench}The control-flow graph for the program from
  Fig.\ \ref{f:hybrid}.}
  \end{minipage}
\end{figure}
The upper bound $10$, though, is missed both by the vanilla version of \SLR as well as of \SLR enhanced with
localized placement of $\upd$. The reason is that the inner loop is iterated with the interval $[1,\infty]$ for $i$ 
until stabilization before, triggered by a narrowing iteration of the outer loop, 
the value $[1,10]$ for $i$ arrives at the entry point of the inner loop.
Since $[1,10]\sqcup[1,\infty]=[1,\infty]$, the finite upper bound of $i$ at the entry point cannot be recovered.

In order to improve on this and similar kinds of precision loss, Amato and Scozzari propose to \emph{restart} 
the iteration for sub-programs. The restart could be triggered, e.g., for the body of a loop as soon as the value 
for the head has decreased.

In the following, we indicate how this strategy may be integrated into the generic solver $\SLR_3$ (see Fig.\ \ref{f:restart}).
The resulting algorithm requires a function {\sf restart}. This function when called with a priority $r$ and an unknown $x$, recursively traverses the ${\sf infl}[x]$ and sets it to the empty set. 
Each found unknown $y$ is added
to the priority queue $Q$ and removed from the set {\sf stable}. 
Moreover, if the priority of $y$ is less than $r$,
then the value $\rho[y]$ is reset to $\rho_0[y]$ and restarting recursively 
proceeds with $r$ and the unknowns from the
set ${\sf infl}[y]$.
The function {\sf restart} then is called within the function {\sf solve} for an unknown $x$ whenever $x$ is currently contained in
{\sf wpoint} and the new value {\sf tmp} for $x$ is less than the current value for $x$. In this case, 
all unknowns in the set ${\sf infl}[x]$ are restarted (w.r.t.~the priority of $x$).
Otherwise, the algorithm behaves like the algorithm $\SLR_3$.

\begin{figure}
  \centering
\begin{minipage}{0.8\textwidth}
  \begin{lstlisting}
    let rec restart r y =
        add Q y;
        stable := stable $\setminus$ {y};
        if key[y] < r then
          RHO[y] := RHO$_0$[y]
          M := infl[y];
          infl[y] := EMPTYSET;
          foreach z IN M do restart r z
    in
    let rec solve x =
      ...
      if tmp NOTEQ RHO[x] then
        `if wpx && tmp SQSUBSETEQ RHO[x] then
          foreach z IN infl[x] CUP {x} do restart key[x] z;
        else` 
          W := if wpx then infl[x] CUP {x} else infl[x];
          foreach y IN W do add Q y;
          stable := stable MINUS $\;$W;
        infl[x] := EMPTYSET;
        RHO[x] := tmp;
        while (Q NOTEQ EMPTYSET$\!$) LAND (min_key Q <= key[x]) do
          solve (extract_min Q);

  \end{lstlisting}
\end{minipage}
\caption{\label{f:restart}Parts of the solver $\SLR_3$ with restarting.}
\end{figure}


Consider again the program from Fig.\ \ref{f:hybrid}. As soon as narrowing the 
head of the outer loop
recovers the interval $[0,9]$ for the program variable $i$, recursively the values for the reachable program points with
lower priorities are reset to $\bot$. This refers to all program points 
in the body of the outer loop and thus also to
the complete inner loop. Reevaluation of all these program points with the value $[0,9]$ for $i$ at the outer loop head 
provides us with the invariant $1\leq i\leq 10$ throughout the inner loop.

The algorithm will return a $\upd$-solution whenever it terminates. 
A guarantee, however, of termination is no longer possible even if right-hand sides are monotonic and
only finitely many unknowns are visited. 
Intuitively, the reason is the following.
Assume that the value for an unknown $x$ has decreased.
Then we might expect that restarting the iteration for lower priority unknowns 
results in a smaller next approximation for $x$. Due to the non-monotonicity 
introduced by widening, this need not necessarily be the case.
Accordingly, we are no longer able to bound the number of switches between increasing and decreasing phases
for $x$.
%
%
There are simple practical remedies for nontermination, though. We may, for example, bound for each unknown
the number of restarts which do not lead to the same value or a decrease.
This behaviour is somewhat different from the restart policy in 
\citep{amato2013localizing} where nontermination  cannot happen, due to the 
fact that the algorithm keeps trace of which (ascending or descending) phase is 
executed in a given program point, and the restart policy cannot transform  a 
descending phase in an ascending phase. 

\section{Side-effecting systems of equations}\label{s:side}
%
%
%
In the following, generic solving, as we have discussed in the 
preceding sections, is extended to right-hand sides $f_x$ that
not only return a value for the left-hand side $x$ of the equation
$x= f_x$, but additionally may produce \emph{side-effects}
to other unknowns. This extension to equation systems, which corresponds to \emph{assert}-statements
of {\sc Prolog} or {\sc Datalog} programs, has been 
advocated in \citep{Apinis12} for an elegant specification of
inter-procedural analysis using partial contexts and 
flow-insensitive unknowns and thus also of multi-threaded programs 
\citep{SVM03}.

\noindent
\begin{example}\label{e:side}
Consider the following program.
\begin{center}
  \begin{minipage}{0.3\textwidth}
  \begin{lstlisting}
    int g = 0;
    void f$\,$(int b) {
      if (b) g =  b + 1;
      else   g = -b - 1;
    }
    int main() {
      f(1);
      f(2);
      return 0;
    }
  \end{lstlisting}
  \end{minipage}
\end{center}
The goal is to determine a tight interval
for the global program variable $g$. 
A flow-insensitive analysis of globals aims at computing
a single interval which should comprise all values
possibly assigned to $g$. 
Besides the initialization with 0, this program 
has two assignments, one inside the call $f(1)$,
the other inside the call $f(2)$. 
A context-sensitive analysis of the control-flow
should therefore collect the three values $0,2,3$ and combine
them into the interval $[0,3]$ for $g$.
This requires to record for which contexts
the function $f$ is called. This task can nicely be
accomplished by means of a local solver. That solver, however,
has to be extended to deal with the contributions to global
unknowns.
\qed

\end{example}
In general, several side effects may occur to the
same unknown $z$. Over an arbitrary domain of values, though,
it remains unclear how the multiple contributions
to $z$ should be combined. Therefore in this section,
we assume that the values of unknowns are taken from
a directed set $\mathbb D$ with a least element, which is denoted by $\bot$. 
Also right-hand sides are again assumed to be
\emph{pure}. For side-effecting constraint systems this means
that evaluating a right-hand side $f_x$ applied to 
functions ${{\sf get}:{\it X}\to{\mathbb D}}$ and 
${{\sf side}:{\it X}\to{\mathbb D}\to{\bf unit}}$, 
consists of a sequence of value lookups for unknowns by means
of calls to the first argument function ${\sf get}$ 
and side effects to unknowns by means of calls to
the second argument function ${\sf side}$ which is
terminated by returning a contribution in $\mathbb D$
for the corresponding left-hand side.

Subsequently, we assume that each right-hand side $f_x$
produces no side effect to $x$ itself and also
to each unknown $z\neq x$ at most one side effect.
Technically, the right-hand side $f_x$ of $x$ with side effects
can be considered as a succinct representation
of a function $\bar f_x$ that takes a mapping $\rho$
and does not return just a single value, but again
another mapping $\rho'$ where 
$\rho'[x]$ equals the return value computed by $f_x$ for ${\sf get}=\rho$,
and for $z\neq x$, $\rho'[z] = d$ if during evaluation of $f_x\:{\sf get}\:{\sf side}$,
${\sf side}$ is called for $z$ and $d$. Otherwise, i.e., if no side effect occurs
to $z$, $\rho'[z]=\bot$.
A post solution of a system ${x=f_x,x\in{\it X}}$,
of equations with side effects then is a mapping
${\rho:{\it X}\to{\mathbb D}}$ such that for every
${x\in{\it X}}$, ${\rho\sqsupseteq \bar f_x\,\rho}$.
A \emph{partial} post solution with domain ${\sf dom}\subseteq{\it X}$
is a mapping $\rho:{\sf dom}\to{\mathbb D}$ such that for every
$x\in{\sf dom}$, evaluation of $f_x$ for $\rho$ accesses only unknowns
in ${\sf dom}$ and also produces side effects only to unknowns in ${\sf dom}$;
moreover, 
$\bar\rho\sqsupseteq \bar f_x\,\bar\rho$ where $\bar\rho$ is the 
total variable assignment obtained from $\rho$ by setting $\bar\rho[y] \gets\bot$
for all $y\not\in{\sf dom}$.

%
%
%

In the following, we present a side-effecting variant ${\bf SLR}_1^+$ 
of the algorithm {\bf SLR}$_1$ from section \ref{s:local}
that for such systems returns a partial  $\Box$-solution---whenever
it terminates. Moreover, the enhanced solver ${\bf SLR}_1^+$
is guaranteed to terminate whenever all right-hand sides $f_x$
are \emph{monotonic}, i.e., the functions $\bar f_x$ all are monotonic. 

\noindent
\begin{example}
Consider again the analysis of example \ref{e:side}.
The contributions to the global program variable $g$
by different contexts may well be combined individually
by widening to the current value of the global.
When it comes to narrowing, though, an individual
combination may no longer be sound. 
Therefore, the extension of the local solver $\SLR_1$
should collect all occurring contributions into a \emph{set},
and use the \emph{joint value} of all these to possibly improve the
value of $g$.
\qed
\end{example}

Conceptually, the algorithm ${\bf SLR}_1^+$ therefore creates for each side effect
to unknown $z$ inside the right-hand side of $x$,
a fresh unknown $\angl{x,z}$ which receives that
single value during evaluation of the right-hand side $f_x$.
Furthermore, the algorithm maintains for every unknown $z$ an auxiliary set
${\sf set}[z]$ which consists of all unknowns $x$ whose right-hand sides
may possibly contribute to the value of $z$ by means of side effects.
Accordingly, the original system of side-effecting equations
is (implicitly) transformed in the following way:
\begin{enumerate}
\item	Inside a right-hand side $f_x$, the
	side effect ${\sf side}\,z\,d$ is implicitly replaced with
	\[
	{\sf side}\,\angl{x,z}\,d
	\]
	while additionally, $x$ is added to the set ${\sf set}[z]$.
\item	The new right-hand side for an unknown $x$ is extended
	with a least upper bound of all $\angl{z,x}$, $z\in{\sf set}[x]$.
\end{enumerate}
The $\upd$-operator is applied whenever 
the return value of the new right-hand side for $x$ is combined with
the previous value of $x$. 
Let us now list the required modifications 
of the algorithm {\bf SLR}$_1$. 

First, the function ${\sf init}\,y$ is extended with an extra initialization of 
the set ${\sf set}[y]$ with $\emptyset$.
The function ${\sf eval}$ remains unchanged.
Additionally, a function ${\sf side}$ is required for realizing
the side-effects during an evaluation of a right-hand side.
As ${\sf eval}$, the function ${\sf side}$ also receives the left-hand side
of the equation under consideration as its first argument. 
We define:
\begin{center}
  \begin{minipage}{0.7\textwidth}
  \begin{lstlisting}
    side x y d = $\:$if LANGx,yRANG NOTIN dom then
                   RHO[LANGx,yRANG] := BOT;
                 if d NOTEQ RHO[LANGx,yRANG] then
                   RHO[LANGx,yRANG] := d;
                   if y IN dom then
                     set[y] := set[y] CUP {x};
                     stable := stable MINUS $\;${y};
                     add Q y
                   else
                     init y;
                     set[y] := {x};
                     solve y
                   end
                 end
  \end{lstlisting}
  \end{minipage}
\end{center}
When called with $x,y,d$, the function ${\sf side}$ 
first initializes the unknown $\angl{x,y}$ if it is not yet contained
in ${\sf dom}$.
If the new value is different from the old value of $\rho$ for $\angl{x,y}$,
$\rho[\angl{x,y}]$ is updated. Subsequently, the set ${\sf set}[y]$ receives the
unknown $x$, and the unknown $y$ is triggered
for reevaluation. 
If $y$ has not yet been encountered, 
$y$ is initialized,
${\sf set}[y]$ is set to $\{x\}$, and ${\sf solve}\,y$ is called.
Otherwise,
$x$ is only added to ${\sf set}[y]$, and $y$ is scheduled for re-evaluation
by destabilizing $y$ first and then inserting $y$ into the priority queue $Q$.

The third modification concerns 
the procedure ${\sf solve}$.
There,
the call of the right-hand side $f_x$ now receives ${\sf side}\,x$
as a second argument and additionally evaluates all unknowns
collected in ${\sf set}[x]$. The corresponding new line reads:
\[
\begin{array}{l}
	{\sf tmp}\gets\rho[x]\upd\,
	(f_x\,({\sf eval}\,x)\;({\sf side}\,x) \sqcup
	\bigsqcup\{\rho[\angl{z,x}]\mid z\!\in\!{\sf set}[x]\});\!\!\\
\end{array}
\]

\noindent
\begin{example}
Consider again interval analysis for the program from example \ref{e:side}.
Concerning the global program variable $g$, 
the initialization $g=0$ is detected first, resulting in the value ${\rho[g]=[0,0]}$.
Then $g$ is scheduled for reevaluation. This occurs immediately, resulting
in no further change.
Then the calls $f(1), f(2)$ are analyzed, the side effects
of $2$ and $3$ are recorded and $g$ is rescheduled for evaluation.
When that happens, the value $\rho[g]$ is increased to 
\[
{[0,0] \upd{} [0,3]} = 
[0,0] \widen{} [0,3] = 
[0,\infty]
\]
if the standard widening for intervals is applied. Since $\rho[g]$ has changed,
$z$ again is scheduled for evaluation resulting in the value
\[
{[0,\infty] \upd{} [0,3]} = 
[0,\infty] \narrow{} [0,3] = 
[0,3]
\]
Further evaluation of $g$ will not change this result any more.
\qed
\end{example}

\noindent
Analogously to theorem \ref{t:local} from the last section, we obtain:

\begin{theorem}\label{t:side}
\begin{enumerate}
\item
When applied to any system of pure equations with side effects and interesting unknown $x_0$, 
the algorithm ${\bf SLR}_1^+$ returns a partial post solution---whenever it terminates.
\item
Assume that ${\bf SLR}_1^+$ is
applied to a system of pure equations
over a directed set ${\mathbb D}$ with bottom, where
each right-hand side is monotonic. Moreover, assume that the $\sqcup$ operator 
is monotonic as well.
Then for any initial mapping $\rho_0$ and interesting unknown $x_0$, $\SLR_1^+$
is guaranteed to terminate and thus always to return
a partial post solution---whenever only finitely many unknowns are encountered 
%
%
and side effects of low priority variables' right-hand sides always refer to higher priority variables.
\end{enumerate}
\end{theorem}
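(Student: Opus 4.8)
The plan is to mirror the proof of Theorem~\ref{t:local}, viewing the auxiliary unknowns $\angl{x,z}$, the sets ${\sf set}[z]$ and the amended right-hand sides as an implicit presentation of the transformed, side-effect-free system described above, so that the arguments for atomicity, partial correctness and termination carry over, with the modifications forced by side effects.

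For statement~1 I would first re-establish atomicity: a recursive call to ${\sf solve}$ occurs only from ${\sf eval}$ and from ${\sf side}$, and in both cases only for a \emph{fresh} unknown, which therefore cannot affect any previously encountered unknown. Crucially, a call ${\sf side}\,x\,y\,d$ never writes $\rho[y]$ directly --- it only (re)sets the auxiliary cell $\rho[\angl{x,y}]$, extends ${\sf set}[y]$, destabilizes $y$ and enqueues it. Hence, as for ${\bf SLR}_1$, the value of no already-encountered unknown changes during the evaluation of a right-hand side, so upon termination each right-hand side may be regarded as having been evaluated atomically and re-evaluating it does not change $\rho$. Next I would record the usual invariants: $x\in{\sf dom}$ and $x\notin{\sf stable}$ cannot both persist once $Q$ is empty; on termination $\rho[x]=\rho[x]\upd\bigl(f_x\,({\sf eval}\,x)\,({\sf side}\,x)\sqcup\bigsqcup\{\rho[\angl{z,x}]\mid z\in{\sf set}[x]\}\bigr)$ holds for every $x\in{\sf dom}$; ${\sf set}[x]$ contains every $z$ whose last evaluation side-effected to $x$, and $\rho[\angl{z,x}]$ is exactly the value then contributed (and $\bot$ if none); and, as in ${\bf SLR}_1$, ${\sf infl}$ records all read-dependences. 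Combining these, $\bar\rho\sqsupseteq\bar f_x\,\bar\rho$ for every $x\in{\sf dom}$, and all reads and side-effects of $f_x$ stay inside ${\sf dom}$, i.e.\ $({\sf dom},\rho)$ is a partial post solution containing $x_0$. All invariants follow by induction on the number of calls, exactly as before.

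For statement~2 I would argue, as in the proof of Theorem~\ref{t:local}, by induction on ${\sf key}[x]$, from the smallest key upward, over the finite set ${\sf dom}$ of encountered unknowns; the two new hypotheses enter here. Monotonicity of $\sqcup$ guarantees that a join $\bigsqcup\{\rho[\angl{z,x}]\mid z\in{\sf set}[x]\}$ of contributions cannot increase when the contributions weakly decrease, and is constant when the contributing unknowns are not re-evaluated. The hypothesis that side-effects of lower-priority unknowns always target higher-priority unknowns gives $z\in{\sf set}[x]\Rightarrow{\sf key}[z]<{\sf key}[x]$; in particular, the minimal-key unknown receives no side-effects, and in general every unknown contributing to $x$ is covered by the induction hypothesis and is re-solved together with the other unknowns of key below ${\sf key}[x]$. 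It also entails that ${\sf side}$ is never invoked on a fresh unknown, so the only recursive ${\sf solve}$ calls during an evaluation remain those of ${\sf eval}$ for fresh unknowns, keeping the atomicity structure of ${\bf SLR}_1$ intact. Given this, a call ${\sf solve}\,x$ drives $\rho[x]$ first through a widening segment --- finitely many strict increases, since widening sequences stabilize --- and then, once the amended right-hand side returns a value $\sqsubseteq\rho[x]$, through a narrowing segment, while the remaining work on unknowns of key below ${\sf key}[x]$ terminates by the induction hypothesis.

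The heart of the matter --- and the step I expect to be the main obstacle --- is to show that once $\rho[x]$ has entered its narrowing segment the iteration is \emph{purely descending} for $x$, for all unknowns of smaller key, and for the cells $\angl{z,x}$ that are re-touched, so that the amended right-hand side of $x$ never again exceeds $\rho[x]$ and no switch back to widening occurs. For this I would combine: monotonicity of every $f_y$; the invariant that a stabilized unknown $y$ always satisfies $\rho[y]\sqsupseteq f_y\,\rho$, so that re-evaluating $y$ with weakly smaller inputs can only narrow it; monotonicity of $\sqcup$, so the join of contributions to $x$ weakly decreases, being a join of weakly decreasing or constant values; and the facts from the previous paragraph that every contributor $z\in{\sf set}[x]$ has smaller key than $x$ and that unknowns of key exceeding ${\sf key}[x]$ are untouched during ${\sf solve}\,x$, hence keep their values fixed. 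Since narrowing sequences cannot be infinitely strictly decreasing and every intervening sub-iteration terminates by the induction hypothesis, ${\sf solve}\,x$ terminates; the induction then yields termination of ${\sf solve}\,x_0$ and, by statement~1, a partial post solution.
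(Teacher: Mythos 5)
Your proposal is correct and takes essentially the same approach as the paper, whose own proof consists only of the remark that it is ``analogous to the proof of Theorem~\ref{t:local}'' plus an informal explanation of why the priority restriction on side effects is needed. Your reconstruction supplies exactly the details the paper leaves implicit: ${\sf side}$ never directly changes the value of an already-encountered unknown, so atomicity carries over; the priority restriction gives $z\in{\sf set}[x]\Rightarrow{\sf key}[z]<{\sf key}[x]$, so every contributor to $x$ is covered by the key-induction hypothesis and ${\sf side}$ never recurses into ${\sf solve}$ on a fresh unknown; and monotonicity of $\sqcup$ makes the implicitly amended right-hand side of each $x$ monotonic, so the widening-then-narrowing termination argument of Theorem~\ref{t:local} transfers.
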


\noindent Note that in the proof of termination we also require the upper 
bound operator 
$\sqcup$ to be monotone. The property trivially holds when $\D$ is a 
join semi-lattice and $\sqcup$ is the least upper bound. However, there are 
some abstract domains which are not join semi-lattices, such as 
zonotopes \citep{DBLP:conf/sas/GoubaultPV12} or parallelotopes 
\citep{DBLP:journals/entcs/AmatoS12}.

The proof of theorem \ref{t:side} is analogous to the proof of theorem \ref{t:local}.
It is worth-while noting, though, that the argument there breaks down if the assumption on the priorities
in side-effects is not met: in that case, any re-evaluation of a high-priority variable $x$ may have 
another effect onto a low-priority variable $y$ --- even if $x$ does not change.
No guarantee therefore can be given that the overall sequence of values for $y$ will eventually become stable.
If on the other hand, the side-effected variable $y$ has priority greater than $x$, 
at re-evaluation time of $y$, the evaluation of $x$ has already terminated where only the final
contributions to $y$ are taken into account.
Since only finitely many such contributions are possible, the algorithm is overall guaranteed to terminate.

The extra condition on the side effects incurred during fixpoint computation is indeed
crucial for enforcing termination --- as can be seen from the following example. 

\noindent
\begin{example}
Consider the following program:
\begin{center}
  \begin{minipage}{0.2\textwidth}
    \begin{lstlisting}
    int g = 0;
    int main() {
      g = g + 1;
      return 0;
    }
    \end{lstlisting}
  \end{minipage}
\end{center}
where the global is meant to be analyzed flow-insensitively.
Consider an interval analysis by means of solver $\SLR_1^+$,
and assume that
the unknown for the global $g$ has lesser priority than the unknown for the endpoint of the
assignment to $g$. The first side effect to $g$ is the interval $[1,1]$ resulting in the
new value $[0,1]$ which is combined
with the old value $[0,0]$ by means of $\upd$ and then again by means of $\upd$.
Since 
\[
([0,0]\upd{}[0,1]) \upd{} [0,1] = [0,\infty]\upd{} [0,1] = [0,1]
\]
the widening is immediately compensated by the consecutive narrowing.
The same phenomenon occurs at every successive update of the value for $g$,
implying that $\SLR_1^+$ will not terminate.

The solver $\SLR_1^+$ behaves differently if the priority of the unknown for $g$ exceeds the
priority of the unknown for the endpoint of the assignment.
In this case after the first application of $\upd$ at $g$, the assignment is 
processed again.
Since the first application of $\upd$ behaves like a widening, this means that the second
side effect to $g$ is with the interval $[1,\infty]$. 
Accordingly, the following recomputation of the new value for $g$ will be
\[
[0,\infty] \upd{} ([0,0] \sqcup [1,\infty]) = [0,\infty] \upd{} [0,\infty] = 
[0,\infty]
\]
and the fixpoint computation terminates.\qed
\end{example}
In practical applications where the side-effected unknowns
correspond to globals, the extra condition on priorities in theorem \ref{t:side}
can be enforced, e.g., by ensuring that the initializers
of globals are always analyzed \emph{before} the call to the procedure ${\sf main}$.

Theorem \ref{t:side} only discusses the extension of the base version of the algorithm $\SLR_1$ to
systems of equations with side effects. A similar extension is also possible to the solvers with localized 
application of $\upd$. In order to ensure termination also in this case, however, 
we additionally must insert every side-effected unknown into the set {\sf wpoint} of unknowns where the operation
$\upd$ is to be applied.
For the side-effecting version of $\SLR_3$, we therefore define:
\begin{center}
  \begin{minipage}{0.55\textwidth}
  \begin{lstlisting}
    side x y d = $\,$wpoint := wpoint $\cup$ {y};
                 if LANGx,yRANG NOTIN dom then
                   RHO[LANGx,yRANG] := BOT;
                 if d NOTEQ RHO[LANGx,yRANG] then
                   RHO[LANGx,yRANG] := d;
                   if y $\in$ dom then
                     set[y] := set[y] CUP {x};
                     stable := stable MINUS $\;${y};
                     add Q y
                   else
                     init y;
                     set[y] := {x};
                     solve y
                   end
                 end
  \end{lstlisting}
  \end{minipage}
\end{center}

\noindent
With this definition, termination of the algorithm $\SLR_3^+$ can be guaranteed under the same assumptions as for
the algorithm $\SLR_1^+$.

\section{Experimental evaluation}\label{s:experimental}

We have implemented the various generic local solvers 
and included into the analyzer {\sc Goblint} for multi-threaded C programs.
{\sc Goblint} uses CIL as C front-end \citep{cil} and is
written in {\sc OCaml}. 
The tests were performed on 2.7GHz Intel Core i7 laptop, with 8GB DDR3 RAM, running OS X 10.9.

In a first series of experiments we tried to clarify the increase of precision possibly attained
by means of the various $\upd$-solvers w.r.t.\ the standard two-phase solving using widening and narrowing
according to \citep{CousotCousot76}.
For these experiments,
we used the benchmark 
suite\footnote{available at \tt www.mrtc.mdh.se/projects/wcet/benchmarks.html}
from the M\"ardalen WCET research group 
\citep{Gustafsson:WCET2010:Benchmarks}
which collects a series of interesting small examples
for WCET analysis, varying in size from about 40 lines to
4000 lines of code.
This benchmark suite we have extended by four tricky programs from 
\citep{amato2013localizing}: 
\begin{enumerate*}
  \item[a)] \texttt{hh.c},
  \item[b)] \texttt{hybrid.c},
  \item[c)] \texttt{nested.c}, and
  \item[d)] \texttt{nested2.c}.
\end{enumerate*}
On top of standard analyses of pointers, we performed an 
interval analysis.
Opposed to the preliminary experiments in \citep{Apinis13}, 
we now use an interval analysis which soundly approximates 32bit integers with
wrap-around semantics. 
For widening, this means that the operator widens the lower and upper bounds first to 
{\sf minint} and {\sf maxint}, respectively,
and, if an underflow or overflow cannot be excluded, also the corresponding upper and lower bounds.
In order to enable two-phase solving, we performed context-insensitive analysis only.

Within this setting, we determined the precision
achieved by the $\upd$-solvers compared to the corresponding solver 
which realizes a distinct widening phase, followed by a distinct narrowing phase.
The results of this comparison is displayed in 
figs.\ \ref{f:precision1}, \ref{f:precision2}, \ref{f:precision3}, and \ref{f:precision4}. 
\begin{figure}
\begin{center}
  \pgfplotsset{width=12.0cm,height=6cm,compat=1.3}
  \begin{tikzpicture} 
  \begin{axis}[
      enlargelimits=0.0,
      ybar interval=0.66,
      yticklabels = {0,0\%,20\%,40\%,60\%,80\%,100\%},
      x tick label style={font=\tiny},
      y tick label style={font=\small},
      xticklabels = {1,,3,,5,,7,,9,,11,,13,,15,,17,,19,,21,,23,,25,,27,,29,,31,,a,b,c,d},
      ybar interval=0.75,
      yticklabel pos=right,
      ymajorgrids=true,
      yminorgrids=true,
      xmajorgrids=false
    ] 
    
  \addplot
    coordinates {
( 1, 83.62) ( 2, 31.83) ( 3, 21.37) ( 4, 75.00) ( 5, 99.43) ( 6, 0.00) ( 7, 81.25) ( 8, 18.87) ( 9, 51.09) (10, 14.63) (11, 93.75) (12, 6.27) (13, 31.03) (14, 26.67) (15, 90.27) (16, 33.33) (17, 43.59) (18, 0.00) (19, 0.00) (20, 86.32) (21, 15.22) (22, 69.28) (23, 28.00) (24, 13.95) (25, 61.54) (26, 64.06) (27, 19.05) (28, 0.00) (29, 39.13) (30, 0.00) (31, 35.71) (32, 88.15) (33, 30.00) (34, 25.00) (35, 22.22) (36, 33.33) (37, 100)
};
    
  \end{axis} 
  \end{tikzpicture}
\caption{\label{f:precision1}The relative improvement of $\SLR_1$ over two-phase solving.}
\end{center}
\end{figure}
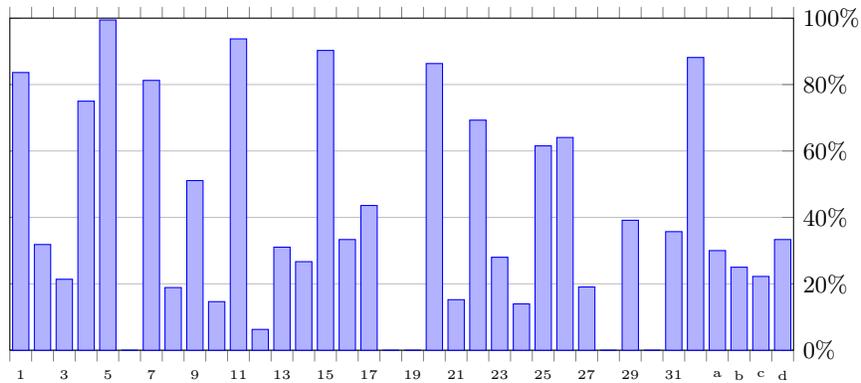
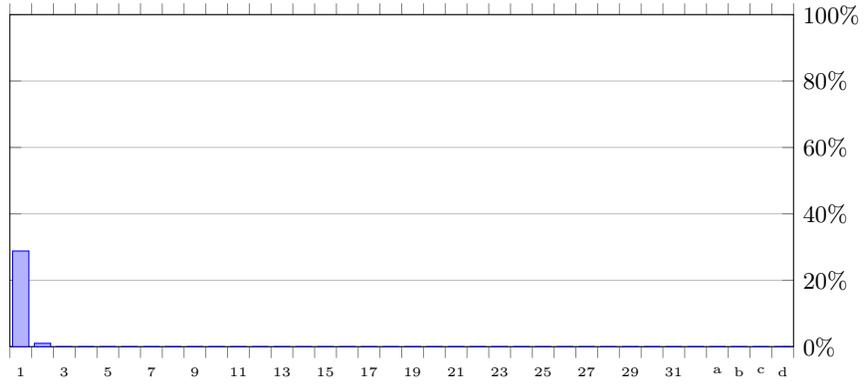
\begin{figure}
\begin{center}
  \pgfplotsset{width=12.0cm,height=6cm,compat=1.3}
  \begin{tikzpicture} 
  \begin{axis}[
      enlargelimits=0.0,
      ybar interval=0.66,
      yticklabels = {0,0\%,20\%,40\%,60\%,80\%,100\%},
      x tick label style={font=\tiny},
      y tick label style={font=\small},
      xticklabels = {1,,3,,5,,7,,9,,11,,13,,15,,17,,19,,21,,23,,25,,27,,29,,31,,a,b,c,d},
      ybar interval=0.75,
      yticklabel pos=right,
      ymajorgrids=true,
      yminorgrids=true,
      xmajorgrids=false
    ] 
    
  \addplot  coordinates {
( 1, 28.81)( 2, 1.01)( 3, 0.00)( 4, 0.00)( 5, 0.00)( 6, 0.00)( 7, 0.00)( 8, 0.00)( 9, 0.00)(10, 0.00)(11, 0.00)(12, 0.00)(13, 0.00)(14, 0.00)(15, 0.00)(16, 0.00)(17, 0.00)(18, 0.00)(19, 0.00)(20, 0.00)(21, 0.00)(22, 0.00)(23, 0.00)(24, 0.00)(25, 0.00)(26, 0.00)(27, 0.00)(28, 0.00)(29, 0.00)(30, 0.00)(31, 0.00)(32, 0.00)(33, 0.00)(34, 0.00)(35, 0.00)(36, 0.00)(37, 100)
};
    
  \end{axis} 
  \end{tikzpicture}
\caption{\label{f:precision2}The relative improvement of $\SLR_2$ over $\SLR_1$.}
\end{center}
\end{figure}
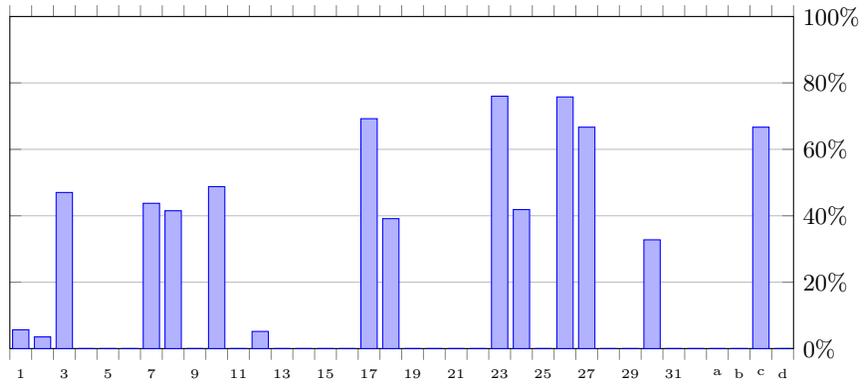
\begin{figure}
\begin{center}
  \pgfplotsset{width=12.0cm,height=6cm,compat=1.3}
  \begin{tikzpicture} 
  \begin{axis}[
      enlargelimits=0.0,
      ybar interval=0.66,
      yticklabels = {0,0\%,20\%,40\%,60\%,80\%,100\%},
      x tick label style={font=\tiny},
      y tick label style={font=\small},
      xticklabels = {1,,3,,5,,7,,9,,11,,13,,15,,17,,19,,21,,23,,25,,27,,29,,31,,a,b,c,d},
      ybar interval=0.75,
      yticklabel pos=right,
      ymajorgrids=true,
      yminorgrids=true,
      xmajorgrids=false
    ] 
    
  \addplot  coordinates {
 (1, 5.65) (2, 3.52) (3, 47.01) (4, 0.00) (5, 0.00) (6, 0.00) (7, 43.75) (8, 41.51) (9, 0.00) (10, 48.78) (11, 0.00) (12, 5.17) (13, 0.00) (14, 0.00) (15, 0.00) (16, 0.00) (17, 69.23) (18, 39.13) (19, 0.00) (20, 0.00) (21, 0.00) (22, 0.00) (23, 76.00) (24, 41.86) (25, 0.00) (26, 75.78) (27, 66.67) (28, 0.00) (29, 0.00) (30, 32.76) (31, 0.00) (32, 0.00) (33, 0.00) (34, 0.00) (35, 66.67) (36, 0.00)(37, 100)
};
    
  \end{axis} 
  \end{tikzpicture}
\caption{\label{f:precision3}The relative improvement of $\SLR_3$ over $\SLR_2$.}
\end{center}
\end{figure}
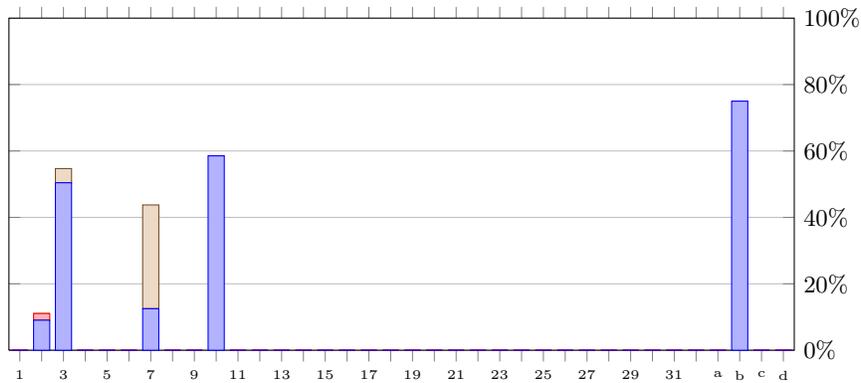
\begin{figure}
\begin{center}
  \pgfplotsset{width=12.0cm,height=6cm,compat=1.3}
  \begin{tikzpicture} 
  \begin{axis}[
      ybar stacked,
      enlargelimits=0.0,
      yticklabels = {0,0\%,20\%,40\%,60\%,80\%,100\%},
      x tick label style={font=\tiny},
      y tick label style={font=\small},
      bar width=2.15mm,
      xtick=data,
      xticklabels = {1,,3,,5,,7,,9,,11,,13,,15,,17,,      
                    19,,21,,23,,25,,27,,29,,31,,a,b,c,d,{}},
      yticklabel pos=right,
      ymajorgrids=true,
      yminorgrids=true,
      xmajorgrids=false,
      x=2.9mm,
      ymax=100,
      ymin=0,
      xmin=0.5,
      xmax=36.5,
      xtick align=outside,
    ] 
    [xshift=3mm]
    \addplot  coordinates { (1,0.00)  (2,9.05)  (3,50.43)  (4,0.00)  (5,0.00)  (6,0.00)  (7,12.50)  (8,0.00)  (9,0.00)  (10,58.54)  (11,0.00)  (12,0.00)  (13,0.00)  (14,0.00)  (15,0.00)  (16,0.00)  (17,0.00)  (18,0.00)  (19,0.00)  (20,0.00)  (21,0.00)  (22,0.00)  (23,0.00)  (24,0.00)  (25,0.00)  (26,0.00)  (27,0.00)  (28,0.00)  (29,0.00)  (30,0.00)  (31,0.00)  (32,0.00)  (33,0.00)  (34,75.00)  (35,0.00)  (36,0.00)  };
   \addplot  coordinates { (1,0) (2,2.01) (3,0) (4,0) (5,0) (6,0) (7,0) (8,0) (9,0) (10,0) (11,0) (12,0) (13,0) (14,0) (15,0) (16,0) (17,0) (18,0) (19,0) (20,0) (21,0) (22,0) (23,0) (24,0) (25,0) (26,0) (27,0) (28,0) (29,0) (30,0) (31,0) (32,0) (33,0) (34,0) (35,0) (36,0) };
    \addplot  coordinates { (1,0.00) (2,0.00) (3,4.27) (4,0.00) (5,0.00) (6,0.00) (7,31.25) (8,0.00) (9,0.00) (10,0.00) (11,0.00) (12,0.00) (13,0.00) (14,0.00) (15,0.00) (16,0.00) (17,0.00) (18,0.00) (19,0.00) (20,0.00) (21,0.00) (22,0.00) (23,0.00) (24,0.00) (25,0.00) (26,0.00) (27,0.00) (28,0.00) (29,0.00) (30,0.00) (31,0.00) (32,0.00) (33,0.00) (34,0.00) (35,0.00) (36,0.00) };    
  \end{axis} 
  \end{tikzpicture}
\caption{\label{f:precision4}Comparison of $\SLR_4$ with $\SLR_3$ indicating the percentage of program points where
the results are incomparable (brown), better (blue) or worse (red).}
\end{center}
\end{figure}
%
Fig.~\ref{f:precision1} reports the percentage of program points where solver $\SLR_1$ returns better results than
two-phase solving.
In the vast majority of cases, $\SLR_1$ returned significantly better 
results---supporting the claim that $\upd$-solving 
may improve the precision.

Fig.~\ref{f:precision2} reports the percentage of program points where an improvement over $\SLR_1$ can be achieved if 
the operator $\upd$ only is applied at widening points, as implemented by solver $\SLR_2$.
Here, our experiments show that, at least for the given simple form of interval analysis, 
an improvement can only be observed for very few example.
The reason might be that, applying narrowing, intertwined with widening can quite often
recover some of the precision lost by the superfluous widenings.

Fig.~\ref{f:precision3} then reports the relative further improvement when additionally widening points can dynamically
be removed during solving. In 15 of 37 cases, we again obtain an improvement, in some cases even for over 70\% 
of program points! This strategy therefore seems highly recommendable to achieve good precision.

Fig.~\ref{f:precision4} finally explores the impact of restarting.
Here, the picture is not so clear. For the second benchmark, restarting resulted even in a loss of precision
for a small fraction of program points, while still for a larger fraction improvements were obtained.
In two further benchmarks, program points with incomparable results where found. 
For benchmark program 3, these make up about 4\% of the program points, while for program 7, the fraction goes even up to
31\%.
In principle such a behavior is not surprising, considering the non-monotonicity of widening.
Still, for two more example programs, drastic improvements are found.
One of these comes from the WCET benchmark suite, while the other has been 
provided in \citep{amato2013localizing},
admittedly, as an example where restarting is beneficial.


\pgfplotsset{width=12cm,height=9cm,compat=1.3}
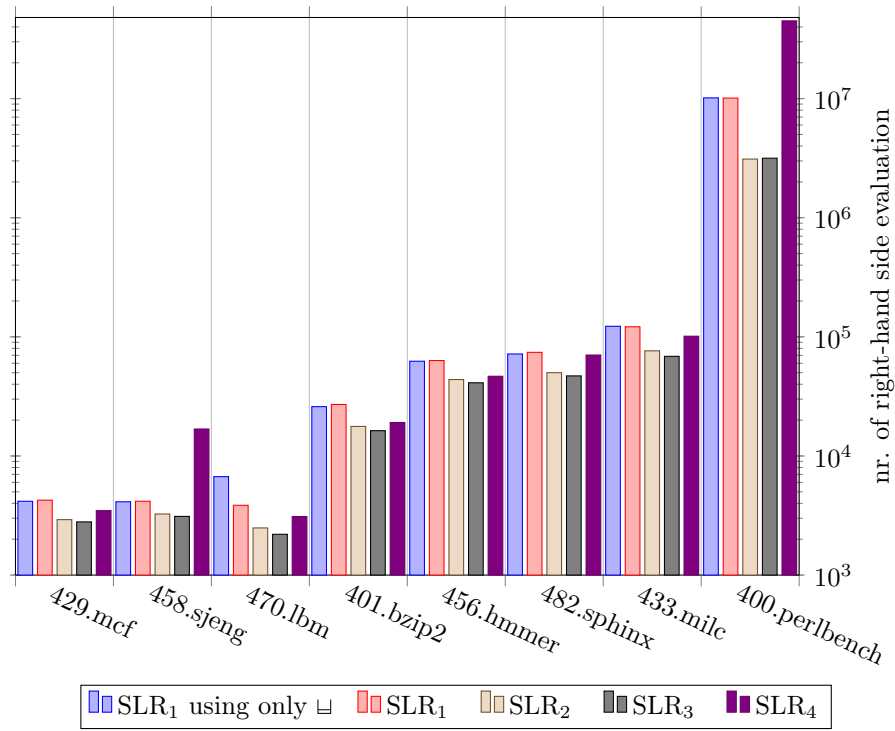
\begin{figure}
  \centering
  \begin{tikzpicture} 
  \begin{semilogyaxis}[
      enlargelimits=0.0,
      legend entries={SLR$_1$ using only $\widen$\ \ \ , SLR$_1$ \ \ \ , SLR$_2$ \ \ \ , SLR$_3$ \ \ \ , SLR$_4$},
      legend to name=legend2,
      legend style={at={(0.5,-0.15)},
      anchor=north,legend columns=-1},
      ybar interval=0.75,
      xticklabels = {429.mcf,458.sjeng,470.lbm,401.bzip2,456.hmmer,482.sphinx,433.milc,400.perlbench,{}},
      x tick label style={rotate=-25,anchor=west,xshift=-0.7em,yshift=-0.4em},
      scaled ticks=false,
      yticklabel pos=right,
      ymax=48000000,
      ytick align=center,
      ylabel={nr. of right-hand side evaluation}
    ]

    \addplot coordinates { (1, 4165) (2, 4125) (3, 6705) (4, 25942) (5, 62446)  (6, 71827)(7, 122670) (8,10150897) (9, 1000)
    };
        
    \addplot coordinates { (1, 4254) (2, 4169) (3, 3854) (4, 27055) (5, 63256) (6, 74146) (7, 121521) (8,10109505)(9, 1000)
    };
      
    \addplot coordinates { (1, 2920) (2, 3256) (3, 2488) (4, 17700) (5, 43824)  (6, 50028)(7, 76350) (8,3109734)(9, 1000)
    };
    
    \addplot coordinates { (1, 2796) (2, 3110) (3, 2200) (4, 16309) (5, 41146)  (6, 47012)(7, 68626) (8,3162658)(9, 1000)
    };
    
    \addplot coordinates {(1, 3475)(2, 16844)(3, 3101)(4, 19096)(5, 46694)(6, 70374)(7, 101295)(8,44978806)(9, 1000)
    };
  \end{semilogyaxis}  
  \end{tikzpicture}
  \ref{legend2}  
  \caption{Context sensitive interval analysis of SpecCpu2006 programs.}
  \label{f:evals2}
\end{figure}

In a second experiment, we explored the relative efficiencies of our
implementation of the generic local $\upd$-solvers. 
For that, we performed interval analysis 
where local variables are analyzed depending on a calling context
which includes all non-interval values of locals,
while the values of globals are analyzed flow-insensitively.
Such kind of analysis cannot be performed by the two-phase approach,
since right-hand sides are not monotonic and
the sets of contexts and thus also the sets of unknowns
encountered during the widening and narrowing phases may vary.

This type of analysis, we tried for all benchmarks from the SpecCpu2006 benchmark suite
which can be handled by the C front-end CIL used in our analyzer. The
set of selected benchmarks consist of seven programs in the range of 1 to 33
kloc, {\tt 400.perlbench} with 175 kloc, and {\tt 445.gobmk} with 412
kloc of C code.
The results for the side-effecting versions of $\SLR_1$ to $\SLR_4$ 
are reported in fig.~\ref{f:evals2} where
the numbers of evaluations of right-hand sides are displayed on a logarithmic scale. 
For a comparison we also included the numbers of evaluations 
if the solver $\SLR_1^+$ uses plain widening instead of $\upd$.

The analysis of the seven smaller programs could be handled in less than 13
seconds. The large program {\tt 400.perlbench} (175 kloc of C code) could be handled by our solvers ---
but with running times between 18 minutes (using $\SLR^+_3$) and 4 hours (using
$\SLR^+_4$), while context-sensitive analysis did not terminate for the largest benchmark {\tt 445.gobmk} ( 412 kloc) 
within 5 hours.

The first observation is that $\SLR_1^+$ is only marginally slowed down,
if widening is enhanced to $\upd$, i.e., narrowing is added.
The second observation is that the efficiency of fixpoint computation
is greatly improved when restricting the application of $\upd$ to widening points.
Improvements of about 30\% could consistently be obtained.
For the large program {\tt 400.perlbench}, the speedup even was by a factor of 3.
%
%
Enhancing solver $\SLR_2^+$ to solver $\SLR_3^+$, on the other hand, which comes with
a significant improvement in precision, additionally results
in another slight reduction of the number of evaluated  right-hand sides.
To us, these numbers came at a surprise, since even in those scenarios where we
could theoretically establish termination of the algorithms, we expected
drastically worse running times of iteration with $\upd$ when compared with iteration with widening alone.

Restarting, finally, adds another dimension of potential inefficiency to
fixpoint iteration. Yet, our numbers for $\SLR_4^+$ on the benchmark suite show
that the practical slowdown over the fastest solver $\SLR_3^+$ is in many cases
still better than solving with $\SLR_1^+$ with widening alone. For the programs
{\tt 458.sjeng} and {\tt 400.perlbench}, however, $\SLR_4^+$ is slower by a
factor of $5$ and $14$, respectively.

%
%
%

In summary, the $\upd$-solver ${\bf SLR}_3^+$ turns out to be a robust algorithm 
with decent run-times. Interestingly, the increase in precision over plain widening 
as well as over $\upd$-solving by means of $\SLR_1^+$, is not penalized by means of a 
slowdown, but is combined with a significant speedup.
The new solver allows to significantly improve precision over the two-phase widening/narrowing
approach and also is successfully applicable in more general analysis scenarios, 
where the two-phase approach was not applicable.

%
%
%
%
%
%
%

\section{Related work}\label{s:related}

Numerous attempts have been made to face the problem of the loss of precision introduced by widening operators.  
Some authors propose to avoid widening and compute a fixpoint of the Kleene 
iteration by using strategy/policy iteration 
\citep{CostanGGMP05-cav,GawlitzaS11-toplas}  or acceleration operators 
\citep{GonnordH06-sas}, but these methods are applicable only to specific 
abstract domains or under syntactical restrictions to the program syntax.
In contrast, our approach is generally applicable, independently from the choice of the 
abstract domain and operators used in the analysis or syntactical restrictions.

Another domain-independent approach is to design enhanced widening operators 
such as delayed widening, widening with threshold  
\citep{blanchet2003static}, widening with landmarks \citep{Simon06} and 
lookahead 
widening \citep{gopan_lookahead_2006}. These may work in some specific settings 
and abstract domains, but still may benefit from an accompanying narrowing iteration. 
These kinds of enhancements are orthogonal to our approach. They may be plugged into the $\upd$-operator, 
and thus be used together with our fixpoint algorithms.

Due to the presence of widening operators, it has been observed that the entire 
analysis fails to be monotonic. Therefore, selecting a different 
starting point of the analysis, other than the bottom of the abstract domain, 
may improve the overall result. In practice, this has been exploited by different techniques, which all have in common the idea to repeat the entire analysis multiple times with some variations, and afterwards
combine the results.
%
%
The proposal of \cite{Halbwachs12} is to iterate the 
analysis starting from a different initial value. After each widening/narrowing 
phase, the result is perturbed in order to get a new value to restart the 
widening/narrowing phase. The intersection of all the obtained results is 
guaranteed to be a post-fixpoint. There are several approaches to choose the 
perturbation, but only the simplest one has been implemented so far. In 
\citep{amato2013localizing}, experimental evidence is provided that localized 
widening with a standard separated narrowing is competitive with respect to 
this approach. Note that $\SLR_2$ generalizes the ideas of 
\cite{amato2013localizing}.
%
%
Gopan and Reps' guided static analysis \citep{gopan_guided_2007} applies a
standard program analysis to a sequence of program restrictions. Each  
restriction is analyzed starting from the result of the previous restrictions, 
until the original program is analyzed. Moreover, 
\cite{HenryMM12-tapas} enhance guided static analysis by 
combining it with path-focusing \citep{MonniauxG11-sas}, in order to avoid 
merging infeasible paths and find precise disjunctive invariants. 
\cite{amato2013localizing} give some evidence, though, that guided static 
analysis 
does not help in those cases 
where localized widening and intertwined widening and narrowing are beneficial.
Monniaux and Le Guen's stratified static analysis by variable dependency  
\citep{stratified} is similar to guided static analysis in that successive 
approximations of the program are considered, where later approximations 
consider more variables than former ones. The result of one approximation is 
used within the successive approximations to improve the results.
%
%

These techniques treat the equation solver as a black box, and try to 
execute different analyses to improve the result. In this sense, they are orthogonal to 
our engineering of fixpoint algorithms and therefore may benefit from our improvements.
In particular, the combination with static guided analysis seems promising.

\section{Conclusion}\label{s:concl}

We have presented a generic combination of widening and narrowing
into a single operator $\upd$ and systematically explored solver algorithms
which, when instantiated with $\upd$ will solve general systems of equations.
Perhaps surprisingly, standard versions of fixpoint algorithms, 
when enhanced with $\upd$,
may fail to terminate even for finite systems of monotonic equations.
Therefore, we presented variants of round-robin iteration, of ordinary worklist
iteration as well as of recursive local solving with and without side effects
where for monotonic equations and finitely many unknowns, termination can be 
guaranteed whenever only finitely many unknowns are encountered, and side-effects 
are to higher-priority unknowns only.
In order to enforce termination, we assigned static priorities to the
unknowns of the system.
In order to construct generic solvers for arbitrary systems of equations,
we heavily relied on self-observation of the solvers. Thus, we assign the priorities in
the ordering in which the unknowns are encountered. We let the fixpoint iterator
itself determine the dependencies between unknowns. Together with the static
priorities, also the places where to apply the operator $\upd$ are dynamically 
determined. 

It has not been clear before-hand, though, how well the resulting algorithms behave for 
real-world program analyses. In order to explore this question, we have provided an implementation within the 
analysis framework Goblint. In our experimental set-up, we considered 
inter-procedural interval analysis
where the monotonicity assumption is not necessarily met. 
Our experiments confirm that fixpoint iteration based on
the combined operator $\upd$ still terminates and may increase precision considerably.
%
This holds true already for the local solver $\SLR_1^+$ which has been 
presented in \citep{Apinis13}.
Beyond that, we demonstrated that the add-on of localizing $\upd$ operators 
increases precision further, while efficiency is improved at the same time.
An equally clear picture could not be identified for the extra optimization of restarting.
While we found improvements in selected cases and generally still an acceptable efficiency,
we also found exceptional cases where a (minor) loss of precision occurs at 
some program points 
or where the performance is degraded considerably.

At the end, we think that the two most important benefits of using the 
$\upd$-operator are:
\begin{itemize}
\item the increase in precision w.r.t.~standard analysis with separate widening and narrowing phases;
\item simpler implementation of solvers w.r.t.~other solutions with separate 
and (especially) interleaved widening and narrowing phases (compare, for 
example, the complexity of the solver based on localized narrowing in 
\citep{amato2013localizing} with the solver \textbf{SRR}).
\end{itemize}

Our experiments were performed for standard interval analysis with the obvious
widening and narrowing operators.
It remains for future work to explore how well our methods work also for other domains
and for more sophisticated widening and narrowing operators.
%
%




\section*{References}
\bibliographystyle{elsarticle-harv}
\bibliography{mybib}

\end{document}